\documentclass[10pt]{article}
\linespread{1.1}
\usepackage{amssymb}
\usepackage{amsfonts}
\usepackage{amsthm}
\usepackage{amsmath}
\usepackage{bm}	
\usepackage{bbm}
\usepackage{hyperref}
\usepackage{cite}
\usepackage{verbatim}

\addtolength{\hoffset}{-1.25cm}
\addtolength{\textwidth}{2.25cm}

\title{Some improved nonperturbative bounds for Fermionic expansions}
\author{Martin Lohmann \\ \textit{{\small  Dipartimento di Matematica, Universita di Roma Tre}}}
\date{}
\allowdisplaybreaks

\newcommand{\ud}{\,\mathrm{d}}
\newcommand{\bd}[1]{\mathbf{#1}}
\newcommand{\mc}[1]{\mathcal{#1}}
\newcommand{\mf}[1]{\mathfrak{#1}}
\newcommand{\half}[0]{\frac{1}{2}}
\newcommand{\id}[0]{\mathbbm{1}}
\newcommand{\pderi}[1]{\frac{\partial}{\partial #1}}
\newcommand{\ol}[1]{\overline{#1}}
\newcommand{\ul}[1]{\underline{#1}}
\newcommand{\mb}[1]{\mathbb{#1}}
\newcommand{\qqquad}{\qquad\qquad\qquad\qquad\qquad}
\newcommand{\const}{\text{ const }}
\newcommand{\ob}[1]{^{(#1)}}
\newcommand{\sgn}{\text{sgn }}
\newcommand{\Pf}{\text{Pf }}
\newcommand{\Ant}[1]{\text{Ant}_{#1}}

\theoremstyle{plain}
\newtheorem{thm}{Theorem}
\newtheorem{cor}{Corollary}

\theoremstyle{remark}
\newtheorem*{rem}{Remark}

\begin{document}

\maketitle

\begin{abstract}

We reconsider the Gram-Hadamard bound as it is used in constructive quantum field theory and many body physics to prove convergence of Fermionic perturbative expansions. Our approach uses a recursion for the amplitudes of the expansion, discovered in a model problem by Djokic \cite{djokic}. It explains the standard way to bound the expansion from a new point of view, and for some of the amplitudes provides new bounds, which avoid the use of Fourier transform, and are therefore superior to the standard bounds for models like the cold interacting Fermi gas.

\end{abstract}

\section{Introduction}

The mathematical theory of interacting Fermions has seen some impressive successes during the past decades, partly because, in contrast to its Bosonic counterpart, many perturbative arguments from theoretical physics can be made rigorous by using certain natural \emph{convergent} resummations of the Fermionic perturbation series \cite{caianiello1956number,simon1969convergence,gawccdzki1985gross,lesniewski1987effective}. Despite this, for some important models, notably the cold Fermi gas, the bounds proven on these resummations are far off the ones conjectured and used by theoretical physicists, or can brought to match the latter only with substantial effort. The discrepancy between available nonperturbative bounds and perturbative predictions, even though believed by many to be only technical, has since become one of the major obstacles in the field. In this paper, we prove nonperturbative bounds of a new kind, inspired by the solution of Djokic \cite{djokic} to a model problem, stated by Feldman, Kn\"orrer and Trubowitz \cite{feldman1973trubowitz} as an elementary geometric question about permutations. Our bounds do not solve the discrepancy, but illustrate the problem from a new perspective.\\
The convergence of Fermionic perturbation series is a consequence of antisymmetry properties of Fermionic interaction kernels. All current strategies to prove nonperturbative bounds for the series use the Fourier transform (from momentum space to position space) to exploit this antisymmetry. More precisely, in some way or another, the arguments are based on the inequality $\Vert f\Vert_\infty\leq \frac{n^{\frac n2}}{n!} \Vert \hat f\Vert_1 $ that holds for any antisymmetric function $f = f(p),p\in\mb R^n,  $ and its Fourier transform $\hat f = \hat f(x) $ and is a simple consequence of the Gram-Hadamard inequality for determinants. Unfortunately, the Fourier transform of some quantities entering the perturbation series can be ill behaved, and the transform obscures the fundamental principle of momentum conservation, and is therefore avoided in the perturbative arguments of theoretical physics. At the same time, the advantages of Fourier transform over other tools that work more directly in momentum space and provide similar convergence factors, like the improved Poincare inequality $\Vert f\Vert_\infty\leq \frac{n^{\frac n2}}{n!} \Vert \bm \nabla f\Vert_1 $ with $\bm\nabla = \pderi{p_1}\cdots\pderi{p_n} $, have never been systematically described. Many mathematical physicists have therefore searched for nonperturbative bounds that avoid Fourier transform, alas with no success.\\
In this paper, we reconsider the interplay of the antisymmetrization operation and momentum conservation in Fermionic perturbation series, describe how the Fourier transform can be used to manipulate the series to obtain summable, nonperturbative bounds, and describe alternative manipulations that produce nonperturbative bounds which work in momentum space and avoid Fourier transform, but are only summable over a subseries of the whole perturbation series characterized by a simple momentum conservation structure. More precisely, as is done in many proofs of the Gram-Hadamard inequality, we express the antisymmetrization operation in terms of an exterior tensor algebra, in which momentum conservation defines a certain recursion whose solution yields the terms of the perturbation series, similar to Djokic's recursion \cite{djokic} in his model problem. Fourier transform decomposes the antisymmetric tensors produced by the recursion into tensors of rank $1$, which reduces bounds on the recursion to bounds on the wedge product of rank $1$ tensors. Alternative manipulations, of which we propose two examples, decompose into tensors of higher rank, and use bounds on the wedge product of higher rank tensors. This produces nonperturbative bounds for terms of the perturbation series whose momentum conservation structure, as represented by a spanning tree of the Feynman graph of that term, takes on a simple form (in the examples we give, the underlying trees need to have either short or only few branches).\\
The most intensively studied model for which the discrepancy that is the subject of this paper is present, is the nonrelativistic Fermi gas with two body interactions at low temperature. Indeed, the perturbation theory (to any order) of this model has been understood in great detail using the renormalization group \cite{benfatto1990renormalization,feldman1990perturbation,metzner2012functional} in momentum space, and involves, in all space dimensions, a relevant renormalization of the Fermi surface and a marginal flow of the two body interaction. By way of contrast, employing the Fourier transform to position space in a nonperturbative renormalization group study of the same model seems to display a relevant flow of the two body interaction in dimensions $ d\geq 2$\footnote{In $d=1 $, the singularities of the momentum space propagator are points, and perturbative power counting coincides with the nonperturbative one (see \cite{gentile2001renormalization} for a review). The same is true for other models with point singularities, like the Gross-Neveu model. The results of this paper are not needed in such cases.}, and many questions about the model become intractable. Adding subtle momentum space analysis to the nonperturbative argument, in particular sectorization of the Fermi surface in \cite{disertori2000rigorous,benfatto2003low,benfatto2006fermi} for $d=2 $ and strictly positive temperature, and overlapping loops and bounds for ladder graphs in \cite{feldman2004two} for $d=2$, zero temperature, and nonzero magnetic field, some of the perturbative predictions were established rigorously. Similar results in the physical $d=3$ case are not available, in particular because the sectorization arguments are specific to $d=2$. Some partial results for the $d=3$ jellium model that avoid momentum space analysis to a large extend have been obtained, however \cite{magnen1995single,disertori2001interacting,disertori2013parametric}. All these results require great effort compared to the perturbative treatment in momentum space. In this paper, using a relatively simple argument, we explore how momentum space can be used in nonperturbative bounds and show, as an application and motivation, that a certain subseries (characterized by a simple momentum conservation structure) of the perturbation series for the cold Fermi gas, in any dimension, has the same power counting nonperturbatively as it does perturbatively, so that in particular, the two body interaction has a marginal flow (in this subseries).    \\
This paper is organized as follows: In section \ref{notation}, we fix notation for the perturbation series and its resummation we shall use. In section \ref{bounds}, we reduce bounds on the resummed perturbation series to bounds on the antisymmetrization operation, formulate this operation in terms of an exterior algebra, describe how momentum conservation leads to a recursion in this algebra, and discuss the Fourier transform and alternative manipulations as strategies to bound the recursion. Finally, in section \ref{applications}, we give two concrete examples for bounds that can be obtained with our strategy and which avoid Fourier transform to position space, and give a simple minded comparison between the new bounds and the standard Fourier one in the case of a single scale many Fermion system.\\
The focus of this paper is on methodology. We want to emphasize the recursion discovered in \cite{djokic} as a calculational tool for deriving nonperturbative bounds for Fermionic expansions. We do not state theorems or lemmas until section \ref{applications}, where we formulate the two simplest applications of our strategy. The experienced reader might want to look at the bounds of Theorem \ref{thmsparse} and \ref{shortbranches} therein for a quick impression of the scope of our strategy, and at Corollary \ref{corollary} for the application to many Fermion systems. We hope that bounds of this kind will help in the investigation of models where classes of Feynman graphs within their range of application bear physical importance, like the ladder graphs do for cold Fermi gases.

\section{Grassmann Gaussian integrals and Fermionic expansions}\label{notation}

The perturbative expansion of observables of Fermionic quantum fields or many body systems can be conveniently expressed in terms of Grassmann algebras and Grassmann Gaussian integrals. In this language, the algebraic steps of the expansion are expressed in terms of basic mathematical operations, notably the computation of Pfaffians and certain interpolations based on the Taylor expansion. In this section, we quickly discuss the relevant notions. This is standard material, which we include only to introduce notation. Complete references for Grassmann algebras and integrals in many body physics and quantum field theory are \cite{feldman2002fermionic,berezin}, introductions to Fermionic expansions are \cite{brydges1988mayer,abdesselam1998explicit,feldman1998representation,salmhofer2000positivity,gentile2001renormalization,giuleshouches}.

\subsection{Grassmann calculus}

\subsubsection{Grassmann algebras of interactions}

Interactions of Fermionic particles are modeled by elements $W $ of the Grassmann algebra $$\mf W   = \bigoplus_{n\geq 1} \bigwedge^n\mc V $$ of the vector space $\mc V = \mb C^{\mb L} $. We shall take the lattice $\mb L = \epsilon_1 \mb Z/L_1\epsilon_1\mb Z\times \cdots\times \epsilon_D\mb Z/L_D\epsilon_D\mb Z \times\Sigma =: \mb T \times \Sigma $ to be the (discrete) torus times a finite index set, with small, possibly zero, $\epsilon_c $, and large, possibly infinite $L_c\in\mb N $. We denote its elements $\xi = (x,\sigma)\in\mb T\times\Sigma $. If $\psi(\xi),\,\xi\in\mb L $, is the standard basis of $\mc V $, each $W\in\mf W $ can be written uniquely as
\begin{align*}
W &= \sum_{n\geq 1} \sum_{\xi_1,\ldots,\xi_n\in\mb L} w_n(\xi_1,\ldots,\xi_n) \psi(\xi_1)\wedge \cdots\wedge\psi(\xi_n) = \sum_{n\geq 1} W_n,
\end{align*}
with $w_n(\xi_1,\ldots,\xi_n)\in \mb C $ antisymmetric in its arguments. If a total order on $\mb L $ is chosen, the family $\psi(\xi_1)\wedge\cdots\wedge \psi(\xi_n),\xi_1<\cdots<\xi_n $, is a basis of $\bigwedge^n\mc V $. In this basis, $W$ is expressed as
\begin{align*}
W &= \sum_{n\geq 1} \sum_{\xi_1<\cdots<\xi_n} w_{n,o}(\xi_1,\ldots,\xi_n) \psi(\xi_1)\wedge \cdots\wedge\psi(\xi_n)
\end{align*}
with $ w_{n,o}(\xi_1,\ldots,\xi_n) = n!w_n(\xi_1,\ldots,\xi_n)  $. The product of $W,W'\in \mf W $ is
\begin{align*}
WW' &= \sum_{n,n'\geq 1} \sum_{\substack{ \xi_1,\ldots,\xi_n\in\mb L\\ \xi_1',\ldots,\xi_{n'}'\in\mb L }} w_n(\xi_1,\ldots,\xi_n)w'_{n'}(\xi_1',\ldots,\xi_{n'}') \\&\qqquad\times \psi(\xi_1)\wedge\cdots\wedge\psi(\xi_n)\wedge\psi(\xi_1')\wedge\cdots\wedge \psi(\xi_{n'}')\\
&= \sum_{n\geq 1} \sum_{\xi_1,\ldots,\xi_n\in\mb L} [ww']_n(\xi_1,\ldots,\xi_n)  \psi(\xi_1)\wedge\cdots\wedge\psi(\xi_n)
\end{align*}
where
\begin{align*}
[ww']_n(\xi_1,\ldots,\xi_n) &= \sum_{n_1=1}^{ n-1}  \Ant{} \big[w_{n_1}(\xi_1,\ldots,\xi_{n_1}) w_{n-n_1}'(\xi_{n_1+1},\ldots,\xi_n)\big]
\end{align*}
with
\begin{align*}
\Ant{} f(\xi_1,\ldots,\xi_n) &= \frac 1{n!} \sum_{\pi\in\mc S_n} \sgn\pi f(\xi_{\pi(1)},\ldots,\xi_{\pi(n)}).
\end{align*}
The partial antisymmetry of the expression $w_{n_1}(\xi_1,\ldots,\xi_{n_1}) w_{n-n_1}'(\xi_{n_1+1},\ldots,\xi_n) $ would allow to restrict the sum over permutations in Ant to shuffle permutations (i.e. permutations that are order preserving on $1,\ldots, n_1 $ and $n_1+1,\ldots,n $). Similar formulas and considerations apply to higher products $W_1\cdots W_k $.\\[5pt]
If $\epsilon_c>0 $ and $L_c<\infty $, $c=1,\ldots,D $, the above sums are all finite. The limiting cases $\epsilon_c\to0, L_c\to\infty $ can be treated as in Appendix A in \cite{feldman2002fermionic}. The technicalities due to interchanges of sums/integrals are not complicated and we use the finite dimensional notation $\sum_x $ even in the case where really $\int \ud x $ is meant.

\subsubsection{The Grassmann Gaussian integral}

The Grassmann Gaussian integral encodes the properties of the noninteracting Fermion system that is perturbed by the interaction $W\in\mf W $. Given an antisymmetric kernel $C:\mb L\times\mb L\to \mb C $, it is defined to be the linear functional on $\mf W$ determined by
\begin{align*}
\int \psi(\xi_1)\wedge\cdots\wedge\psi(\xi_n)\ud\mu_C(\psi) &= \Pf \big(C(\xi_m,\xi_{m'})\big)_{m,m'\in\ul n},
\end{align*}
where the Pfaffian is $\text{Pf }\big(C(\xi_m,\xi_{m'})\big)_{m,m'\in\ul n} = 0  $ if $n $ is odd and
\begin{align*}
\Pf\big(C(\xi_m,\xi_{m'})\big)_{m,m'\in\ul{2n'}} &= \frac1{2^{n'}n'!} \sum_{\pi\in\mc S_{2n'}} \sgn\pi \prod_{m=1}^{n'} C(\xi_{\pi(2m-1)},\xi_{\pi(2m)} ).
\end{align*}
The Pfaffian of the empty matrix is $1$. $C$ is referred to as the covariance of the Grassmann Gaussian integral. We have written here $\ul n = \{1,\ldots,n\} $ for any $n\in\mb N $. In particular, for $W\in\mf W $, we have
\begin{align*}
\int W\ud\mu_C(\psi) &= \sum_{n\geq 1} \frac{(2n)!}{2^nn!}\sum_{\xi_1,\ldots,\xi_{2n}\in\mb L}  w_{2n}(\xi_1,\ldots,\xi_{2n}) \prod_{m=1}^n C(\xi_{2m-1},\xi_{2m})
\end{align*}

\subsubsection{Translation invariance and momentum space}

The easiest situation in which the correct volume factors for thermodynamic quantities can be extracted is for translation invariant systems. Such systems are characterized by an interaction of the form 
\begin{align*}
W &= \sum_{n\geq 1} \sum_{\substack{\xi_1,\ldots,\xi_n\in\mb L \\ x\in \mb T}} w_n(\xi_1,\ldots,\xi_n)\psi(x+\xi_1)\wedge\cdots\wedge \psi(x+\xi_n)
\end{align*}
and by a covariance satisfying $C(\xi,\xi') = C(x+\xi,x+\xi') $. Here, for $x\in\mb T $ and $\xi = (x',\sigma)\in\mb L = \mb T\times\Sigma, $ $x+\xi = (x+x',\sigma)\in\mb L $. From now on, we restrict ourselves to translation invariant interactions and covariances.\\
By Fourier transform, we have
\begin{align*}
C(\xi,\xi') &= \vert \mb T\vert^{-1}\sum_{p\in\mb T^*} \hat C_{\sigma,\sigma'}(p) e^{ip(x-x')}\\
&=:  \int_{\mb T^*}\ud p\; \hat C_{\sigma,\sigma'}(p)  e^{ip(x-x')},
\end{align*}
where $\mb T^* = \frac{2\pi}{\epsilon_1L_1}\mb Z/\frac{2\pi}{\epsilon_1} \mb Z \times \cdots \times  \frac{2\pi}{\epsilon_DL_D}\mb Z/\frac{2\pi}{\epsilon_D} \mb Z  $ and
\begin{align*}
\hat C_{\sigma,\sigma'}(p) &= \sum_{x\in\mb T}C\big((0,\sigma),(x,\sigma')\big)e^{ipx} =:\sum_{x\in\mb T}C_{\sigma,\sigma'}\big(x\big)e^{ipx}  .
\end{align*} 
We denote $\mb L^* = \mb T^*\times \Sigma $ and write for its elements $\lambda = (p,\sigma)\in\mb L^* $. We introduce the weighted sums $$\int_{\mb L^*}\ud\lambda =  \sum_{\sigma\in\Sigma}\int_{\mb T^*}\ud p \qquad\text{and}\qquad \int_{\mb L^*\times\Sigma}\ud\lambda = \sum_{\sigma,\sigma'\in\Sigma}\int_{\mb T^*}\ud p $$ and, for $\lambda = (p,\sigma,\sigma')\in \mb L^*\times\Sigma $, the inversion $-\lambda = (-p,\sigma',\sigma) $. The Grassmann Gaussian integral then becomes
\begin{align*}
\int W\ud\mu_C(\psi) &= \vert\mb T\vert \cdot \sum_{n\geq 1} \frac{(2n)!}{2^nn!}\int_{\mb L^*\times\Sigma}\ud\lambda_1\cdots\ud\lambda_n\,  \hat w_{2n}(\lambda_1,-\lambda_1,\ldots,\lambda_n,-\lambda_{n}) \prod_{m=1}^n \hat C(\lambda_m)
\end{align*}
with $(\lambda_m=(p_m,\sigma_m)$ or $=(p_m,\sigma_m,\sigma_m'))$
\begin{align*}
\hat w_n(\lambda_1,\ldots,\lambda_n) &= \sum_{x_1,\ldots,x_n\in\mb T} w_n\big((x_1,\sigma_1),\ldots,(x_n,\sigma_n)\big)e^{ip_1x_1+\cdots+ip_nx_n}.
\end{align*}
We have
\begin{align*}
W_n &= \int_{\mb L^* }\ud\lambda_1\cdots\ud\lambda_n \,\vert \mb T\vert \cdot \delta_{p_1+\cdots+p_n=0}\cdot  \hat w_n(\lambda_1,\ldots,\lambda_n) \,\hat\psi(\lambda_1)\wedge\cdots\wedge \hat \psi(\lambda_n)
\end{align*}
with
\begin{align*}
\hat \psi(\lambda) &= \sum_{x\in\mb T} \psi(x,\sigma) e^{-ipx}.
\end{align*}

\subsection{Fermionic expansions}

We now define some observables of quantum statistical mechanics in terms of Grassmann Gaussian integrals. In theoretical physics, recursive properties of the Pfaffian are used to generate an expansion of these integrals into a perturbative series over Feynman graphs. As is well known, this series is not absolutely convergent. However, certain resummations of the perturbation series, corresponding to a more selective application of the recursive properties, are convergent. The bounds of the next section should work for many versions of such expansion schemes, but for definiteness, we will explicitly introduce one such scheme,  due to Brydges and Wright \cite{brydges1988mayer}, in this section. We shall not need many details of this scheme in the remaining paper.

\subsubsection{Observables in the Grassmann formalism}\label{observables}

In theoretical physics, the properties of a system of Fermions are derived from a Hamiltonian that is expressed in terms of creation and annihilation operators which satisfy the canonical anticommutation relations. Using the method of coherent states \cite{berezin,salmhofer2009clustering,salmhofer2013renormalization}, computations with creation and annihilation operators can be expressed in terms of Grassmann Gaussian integrals. For simplicity, we focus here on the free energy, which in this formalism becomes
\begin{align}
\Omega_C(W) &= \log  \int e^{W(\psi)} \ud\mu_C(\psi) . \label{rgmap}
\end{align}
Here, the interaction $W\in\mf W $ is determined by the interacting part of the Hamiltonian by simple replacements of the creation and annihilation operators with generators $\psi(x) $, and the kernel $C$ is determined by the noninteracting part of the Hamiltonian. \\
The archetypical case of $d$ dimensional free nonrelativistic Fermions at zero temperature, perturbed by an interaction through a two body potential $v( x), x\in\mb R^d $, gives rise to
\begin{align*}
C\big((x^0, x;\sigma;\kappa),(x_0', x';\sigma';\kappa')\big) &= \Ant{}\;\delta_{\sigma,\sigma'} \delta_{\kappa,1}\delta_{\kappa',0}  \int_{\mb R^{d+1}} \frac{\ud p^0\ud^{d }p}{(2\pi)^{d+1}}\frac{e^{ip(x-x')} }{ip^0-\frac{ p^2}{2m}+\mu}\\
(x^0,x;\sigma;\kappa)&\in \mb L= \mb R\times \mb R^{d}\times \{\uparrow,\downarrow\}\times \{0,1\}
\end{align*}
and
\begin{align*}
W &= -\half \sum_{\sigma,\sigma'\in\{\uparrow,\downarrow\}} \int_{\mb R}\ud x^0 \int_{\mb R^{2d}} \ud   x\ud  x'  v(  x-  x')\\&\qquad\qquad\qquad\times\psi(x^0, x;\sigma;1)\wedge\psi(x^0, x';\sigma';1)\wedge\psi(x^0, x;\sigma;0)\wedge\psi(x^0,  x';\sigma';0) .
\end{align*}
Clearly, the integral for $C $ is singular and has bad decay properties. It is an implementation of the renormalization group strategy to replace its integrand by
\begin{align*}
\frac{e^{ip(x-x')} }{ip^0-\frac{ p^2}{2m}+\mu} \longrightarrow \frac{\chi_j(p^0, p) e^{ip(x-x')} }{ip^0-\frac{ p^2}{2m}+\mu} ,
\end{align*}
where $\chi_j\in C^\infty_c(\mb R^{d+1}) $ is supported away from the singularities of the original integrand and $\sum_j \chi_j = 1 $ almost everywhere. Adding an external field, the map $W\to\Omega_C(W)  $ fulfills a ``semigroup property'' $\Omega_{\sum_jC_j} = \circ_j\Omega_{C_j}, $ which reduces the problem to iterating the more regular maps $\Omega_{C_j} $. The bounds of the next section are intended for the analysis of such a single scale map, as expanded in the way we will now describe. We drop the subscript from $\Omega_C $.

\subsubsection{Fermionic expansions}\label{fermexp}

It follows easily from the definition of Grassmann Gaussian integration that
\begin{align*}
\int  W(\psi)^{\wedge n} \ud\mu_C(\psi) &= \int \bigwedge_{m=1}^n W(\psi^m) \ud\mu_{C\otimes \bd 1}(\psi^1,\ldots,\psi^n)
\end{align*}
with $\bd 1(m,m') \equiv 1 ,m,m'=1,\ldots,n$. Expanding the exponential in the definition of $\Omega(W) $, we obtain
\begin{align*}
\Omega(W) &= \log 1+\sum_{n\geq 1} \frac1{n!} \int \bigwedge_{m=1}^n W(\psi^m) \ud\mu_{C\otimes \bd 1}(\psi^1,\ldots,\psi^n).
\end{align*}
For $s(\{m,m'\})\in [0,1], m\neq m'\in\ul n $, we denote by $s $ the symmetric $n\times n $ matrix with $s_{m,m'} = s(\{m,m'\}) $ and $s_{m,m}=1 $. Then $\bd 1 = s\vert_{s(\{m,m'\})=1\,\forall\,m,m'} $. By the BKAR Taylor interpolation formula \cite{abdesselam1995trees}, we have
\begin{align*}
\int \bigwedge_{m=1}^n W(\psi^m) \ud\mu_{C\otimes \bd 1}(\psi^1,\ldots,\psi^n) &= \sum_{k\geq 1}\sum_{\substack{m_1+\cdots+ m_k= n \\ m_j\geq 1}} \frac{n!}{m_1!\cdots,m_k!} \prod_{j=1}^k \sum_{T\text{ tree on }\ul m_j} \\& \times\int_0^1 \prod_{\ell\in T} \ud s_\ell  \int  \Delta^T\bigwedge_{l=1}^{m_j} W(\psi^l) \ud\mu_{C\otimes s^T}(\psi^1,\ldots,\psi^{m_j})\\
\Delta^T &= \prod_{\{l,l'\}\in T}\sum_{\xi,\xi'\in\mb L}   C(\xi,\xi') \pderi{\psi^l(\xi)}\pderi{\psi^{l'}(\xi')} .
\end{align*}
Derivatives anticommute with each other and with the fields. Above, $s^T $ is the symmetric $m_j\times m_j $ matrix corresponding to $$s^T(\{l,l'\}) = \min \{s_\ell,\ell \text{ on the }T \text{ path between }l,l'\}. $$ It is easy to see that $s^T $ is a convex combination of positive semidefinite matrices, such that $s^T = (a^T)^*a^T $ for some symmetric positive semidefinite $m_j\times m_j $ matrix with $\sum_{l'} \vert a^T_{l,l'}\vert^2 = s^T(l,l)=1 $. The sum over $n$ and the logarithm can now be performed and we obtain
\begin{align}\label{treeexp}
\Omega(W) &= \sum_{m\geq 1} \frac1{m!} \sum_{T\text{ tree on }\ul m} \int_0^1\prod_{\ell\in T} \ud s_\ell  \int  \Delta^T\bigwedge_{l=1}^{m} W(\psi^l) \ud\mu_{C\otimes s^T}(\psi^1,\ldots,\psi^{m}).
\end{align}
This expansion corresponds to a resummation of the full Feynman graph expansion that combines all graphs with the same spanning tree. The interpolating integrals compensate for the fact that a graph may have several spanning trees. Many other resummations, such as those of \cite{giuleshouches,abdesselam1998explicit,disertori2000continuous}, are based on the same idea. The structure $s^T = (a^T)^*a^T $ of the interpolating matrix is very important for the bounds on these expansions. A different expansion is the one of \cite{feldman1998representation}, which builds the spanning tree inductively and needs no interpolation of the covariance $C$.

\section{Bounds on Fermionic expansions}\label{bounds}

We will now describe how to obtain estimates on the expansion for $\Omega(W) $ of the last section. We will ignore the interpolation and set $s^T = \bd 1 $. The minor modifications for general $s^T $ will be commented on at the end of the discussion.\\
The term in the expansion corresponding to a fixed tree $T $ on $\ul m $ is (at $s^T=\bd 1 $)
\begin{align}\label{eqamplitude}
\mc A(T) &=  \sum_{n_1,\ldots,n_m\geq 1} \varpi(T,n_1,\ldots,n_m)\int \mc A(T;n_1,\ldots,n_m;\psi)\ud\mu_C(\psi)\\
\mc A(T;n_1,\ldots,n_m;\psi) &=  \sum_{\substack{\xi_{(l,\ell)}\in\mb L \\ \ell\in T,l\in\ell }}\prod_{\ell=\{l,l'\}\in T} C(\xi_{(l,\ell)},\xi_{(l',\ell)}) \bigwedge_{l=1}^{m} \prod_{\ell\ni l}\pderi{\psi(\xi_{(l,\ell)})} W_{n_l}(\psi) \nonumber
\end{align}
The sign $ \varpi(T,n_1,\ldots,n_m)$ depends on the way we conventionally order the products $\prod_{\ell\ni l} $ and the lines $\ell=\{l,l'\}\in T $. We will discuss these ordering issues in more detail soon, but we will bound the integral of each $\mc A(T;n_1,\ldots,n_m;\psi)$ individually and are not interested here in cancellations between different trees $T$ or values of $n_l$, so an explicit expression for $\varpi $ is irrelevant to our purpose. Indeed, set $\bd n = (n_1,\ldots,n_m) $ and suppose that
\begin{align*}
\mc A(T;\bd n;\psi) &= \sum_{n\geq 1} \sum_{\substack{\xi_1,\ldots,\xi_n\in\mb L \\  x\in\mb T }} \mc A_n(T;\bd n;\xi_1,\ldots,\xi_n)\psi(x+\xi_1)\wedge\cdots\wedge \psi(x+\xi_n).
\end{align*}
with $ \mc A_n(T;\bd n;\xi_1,\ldots,\xi_n)$ antisymmetric. Then
\begin{align}\nonumber
\left\vert\int \mc A(T;\bd n;\psi)\ud\mu_C(\psi)\right\vert &\leq \vert\mb T\vert\cdot  \sum_{n\geq 1} \frac{(2n)!}{2^nn!} \int_{\mb L^*\times\Sigma} \prod_{m=1}^n \ud\lambda_m\vert\hat C(\lambda_m)\vert \\&\qqquad\times \big\vert\hat{\mc A}_{2n}(T;\bd n;\lambda_1,-\lambda_1,\ldots,\lambda_n,-\lambda_n)\big\vert\nonumber\\
&\leq \vert\mb T\vert\cdot\sum_{n\geq 1} \frac{(2n)!}{2^nn!} \Vert\hat C\Vert_1^n\cdot  \Vert\hat{\mc A}_{2n}(T;\bd n)\Vert_\infty \label{loopbound}
\end{align}
with $ \Vert\hat C\Vert_1 = \int_{\mb L^*\times\Sigma} \ud\lambda\vert\hat C(\lambda)\vert $, and so
\begin{align*}
\vert \Omega(W)\vert &\leq \vert\mb T\vert\cdot\sum_{m\geq 1} \frac1{m!} \sum_{T\text{ tree on }\ul m} \sum_{n\geq 1} \frac{(2n)!}{2^nn!} \Vert \hat C\Vert_1^n \sum_{\bd n} \Vert \hat{\mc A}_{2n}(T;\bd n)\Vert_\infty
\end{align*}
In the next subsection, we describe how the kernel $\hat{\mc A}_n(T;\bd n;\lambda_1,\ldots,\lambda_n) $ can be written as the antisymmetrization of a simple expression $\hat{\mc A}'_n(T;\bd n;\lambda_1,\ldots,\lambda_n) $ in $\hat C $ and $\hat w $. $\hat{\mc A}_n'(T;\bd n) $ fulfills the (perturbative) bound
\begin{align}\label{perestimate}
\Vert\hat{\mc A}'_n(T;\bd n) \Vert_\infty &\leq \delta_{n=n(\bd n,m)}   \cdot \big(\text{const } \Vert \hat C\Vert_\infty\big)^{m-1} \prod_{l=1}^m d^T(l)!2^{n_l}\Vert \hat w_{n_l}\Vert_\infty 
\end{align}
with $n(\bd n,m) = \sum_l n_l-2(m-1) $ and $d^T(l) $ the degree of $l$ in $T$. Had we used the trivial bound $\Vert \hat{\mc A}_n(T;\bd n)\Vert_\infty  = \Vert \Ant{}\hat{\mc A}'_n(T;\bd n)\Vert_\infty \leq \Vert \hat{\mc A}'_n(T;\bd n)\Vert_\infty $, this would give
\begin{align*}
\vert\Omega(W)\vert &\leq  \vert \mb T\vert\cdot \sum_{m\geq 1} \frac1{m!}  \big(\text{const } \Vert \hat C\Vert_\infty\big)^{m-1} \Vert \hat C \Vert_1\sum_{T\text{ tree on }\ul m} \prod_{l=1}^m d^T(l)! \\&\qqquad\times  \sum_{n_1,\ldots,n_m\geq 1}  n(\bd n,m)^{\half n(\bd n,m)} \prod_{l=1}^m\Vert \hat C\Vert_1^{\frac{n_l-2}2} 2^{n_l}\Vert \hat w_{n_l}\Vert_\infty\\
&\leq \vert \mb T\vert\cdot \sum_{m\geq 1} 8^m   \big(\text{const } \Vert \hat C\Vert_\infty\big)^{m-1} \Vert \hat C \Vert_1 \sum_{n_1,\ldots,n_m\geq 1}  n(\bd n,m)^{\half n(\bd n,m)} \prod_{l=1}^m\Vert \hat C\Vert_1^{\frac{n_l-2}2} 2^{n_l}\Vert \hat w_{n_l}\Vert_\infty
\end{align*}
It is easily seen that these sums cannot be controlled unless $\hat w_n = 0 $ for all $n\geq 3 $. In this section, we show how to use the cancellations in the antisymmetrization operation to obtain nonperturbative bounds of the form 
\begin{align}\label{improbound}
\Vert\hat{\mc A}_n(T;\bd n) \Vert_\infty \leq\delta_{n=n(\bd n,m)} \cdot  n(\bd n,m)!^{-\half} \Vert \hat C\Vert^{m-1} \prod_{l=1}^m d^T(l)!2^{n_l}\Vert \hat w_{n_l}\Vert, 
\end{align}  with different norms on $\hat C $ and $\hat w_n $ than the supremum norm. It is clear that this bound, together with a generic smallness assumption on $\Vert\hat w_n\Vert $, implies a summable bound on $\vert \Omega(W)\vert $, but unfortunately, conceivable norms $\Vert \hat C\Vert $ on the right hand side perform poorly in some important applications. In particular, in the case of the cold Fermi gas in $d\geq 2 $,  norms involving the Fourier transform  (such as $\Vert C\Vert_1 $, for which the standard approach proves (\ref{improbound})) are much larger than the norm $\Vert \hat C\Vert_\infty $ of the perturbative bound (\ref{perestimate}). The motivation of this paper is to shed some dim light on this trade off between acceptable norms and factorials.

\subsection{Antisymmetrization of the kernels of the tree expansion}

Performing the derivatives in the definition (\ref{eqamplitude}) of $\mc A(T;\bd n;\psi) $, we get
\begin{align*}
\mc A(T;\bd n;\psi) &= \sum_{\substack{\xi^l_1,\ldots,\xi^l_{n_l-d^T(l)} \in\mb L \\ x^l\in\mb T  \\ l=1,\ldots,m }} \Bigg[ \sum_{\substack{\xi_{(l,\ell)}\in\mb L \\ \ell\in T,l\in\ell }}\prod_{\ell=\{l,l'\}\in T} C(x^l+\xi_{(l,\ell)},x^{l'}+\xi_{(l',\ell)})\\&\qqquad\times  \prod_{l=1}^m \frac{n_l!}{(n_l-d^T(l))!} w_{n_l}\big((\xi_{(l,\ell)})_{\ell\ni l},\xi^l_1,\ldots,\xi^l_{n_l-d^T(l)}\big)\Bigg] \\&\qqquad\qquad\times \bigwedge_{l=1}^m \psi(x^l+\xi^l_1)\wedge\cdots\wedge \psi(x^l+\xi^l_{n_l-d^T(l)})
\\&= \vert \mb T\vert\cdot \int_{\mb L^*}\prod_{l=1}^m\prod_{k=1}^{n_l-d^T(l)}\ud\lambda^l_k\; \Bigg[ \sum_{\substack{\lambda_{\ell}\in\mb L^*\times\Sigma \\ \ell\in T  }}\prod_{\ell \in T} \hat C(\lambda_\ell) \prod_{l=1}^m \frac{n_l!}{(n_l-d^T(l))!}\\&\qquad\times  \delta\Big(\big(\varsigma_{(\ell,l)}\lambda_\ell\big)_{\ell\ni l} , \lambda_1^l,\ldots,\lambda^l_{n_l-d^T(l)}  \Big)\cdot   \hat w_{n_l}\big((\varsigma_{(\ell,l)} \lambda_{\ell})_{\ell\ni l},\lambda^l_1,\ldots,\lambda^l_{n_l-d^T(l)}\big)\Bigg] \\&\qqquad\qqquad\times \bigwedge_{l=1}^m \hat\psi(\lambda^l_1)\wedge\cdots\wedge \hat\psi(\lambda^l_{n_l-d^T(l)})\\
&=:  \vert \mb T\vert\cdot \int_{\mb L^*}\prod_{l=1}^m\prod_{k=1}^{n_l-d^T(l)}\ud\lambda^l_k\;   \hat {\mc A}_{n(\bd n,m)}'(T;\bd n;\lambda^1_1,\ldots,\lambda^m_{n_m-d^T(m)}) \\&\qqquad\qqquad\times \bigwedge_{l=1}^m \hat\psi(\lambda^l_1)\wedge\cdots\wedge \hat\psi(\lambda^l_{n_l-d^T(l)})
\end{align*}
We have used the notation $\delta((p_1,\cdot), \ldots , (p_n,\cdot)) = \delta_{p_1+\cdots+p_n,0} $. The sequences $(\xi_{(l,\ell)})_{\ell\ni l} $ and $(\varsigma_{(\ell,l)} \lambda_{\ell})_{\ell\ni l} $ are ordered in the opposite order that was conventionally chosen for $ \prod_{\ell\ni l}\pderi{\psi(\xi_{(l,\ell)})}  $. The sign $\varsigma_{(\ell,l)} $ is $+1$ if $\ell=\{l,l'\} $ is conventionally thought of to be oriented towards $l $ and $-1 $ otherwise. The choice of these conventions produces the overall sign $\varpi(T,\bd n) $ in (\ref{eqamplitude}). Its value and the ordering of the sequences $(\xi_{(l,\ell)})_{\ell\ni l}, (\varsigma_{(\ell,l)} \lambda_{\ell})_{\ell\ni l} $ is irrelevant to our discussion, and the orientation of the lines $\ell $ will be discussed shortly. $\hat {\mc A}_n'(T;\bd n;\lambda^1_1,\ldots,\lambda^m_{n_m-d^T(m)})$ is not explicitly written as the coefficient of a translation invariant interaction, and is in general not antisymmetric under exchange of $\lambda^l_k,\lambda^{l'}_{k'} $ unless $l=l' $.

\subsubsection{Momentum conservation and rooted trees}

The expression for $\hat{\mc A}'_n(T;\bd n) $ contains one unnormalized sum over momentum space $\mb L^*\times\Sigma $ for each edge in $T$. These sums are constrained by a momentum conservation delta function $\delta(\lambda_1,\ldots,\lambda_{n_l}) $ for each vertex $l$ of $T$. Clearly, these constraints can only be satisfied if the sum of all external momenta equals zero, in which case they admit a unique solution. This solution can be constructed inductively by using the partial order on the tree associated to picking arbitrarily a root vertex.\\
More precisely, let $r$ be an arbitrary choice for a root of $T$, and for $\ell\in T $ denote by $\mf o(\ell) $ the set of vertices of $T$ whose path to $r $ includes $\ell$. Let $\varsigma_\ell $ equal $+1$ if $\ell $ is conventionally thought of to be oriented towards $r$ and $-1$ otherwise. Then
\begin{align*}
\prod_{l=1}^m \delta\Big(\big(\varsigma_{(\ell,l)}\lambda_\ell\big)_{\ell\ni l} , \lambda_1^l,\ldots,\lambda^l_{n_l-d^T(l)}  \Big) &= \delta\big(\lambda_1^1,\ldots,\lambda^m_{n_m-d^T(m)}\big) \prod_{\ell\in T}\delta_{p_\ell, \varsigma_\ell\sum_{l\in\mf o(\ell)}\sum_{k=1}^{n_l-d^T(l)}p^l_k  }
\end{align*}
where we wrote $\lambda^l_k = (p^l_k,\sigma^l_k) $ and $\lambda_\ell=(p_\ell,\sigma_\ell,\sigma_\ell') $. With $\bm\lambda = (\lambda^1_1,\ldots,\lambda^m_{n_m-d^T(m)}) $ and $$\lambda_\ell(\bm\lambda) = \Big( \varsigma_\ell\sum_{l\in\mf o(\ell)}\sum_{k=1}^{n_l-d^T(l)}p^l_k ,\sigma_\ell,\sigma_\ell'\Big) , $$ we then have
\begin{align}\label{exprap}
\hat {\mc A}_{n(\bd n,m)}'(T;\bd n;\bm\lambda) &= \delta(\bm\lambda)\sum_{\substack{\sigma_\ell,\sigma_\ell'\in\Sigma  \\ \ell\in T}} \prod_{\ell \in T} \hat C(\lambda_\ell(\bm\lambda)) \prod_{l=1}^m \frac{n_l!}{(n_l-d^T(l))!} \\&\qqquad\times    \hat w_{n_l}\big((\varsigma_{(\ell,l)} \lambda_{\ell}(\bm\lambda))_{\ell\ni l},\lambda^l_1,\ldots,\lambda^l_{n_l-d^T(l)}\big)\nonumber
\end{align}
We denote with $\Pi(l) $ the first vertex on the path from $l $ to $r$ (predecessor). We will from now on choose sign conventions such that all lines $\ell = \{l,\Pi(l)\} $ are oriented towards $\Pi(l) $, and that in the list $((\varsigma_{(\ell,l)} \lambda_{\ell}(\bm\lambda))_{\ell\ni l} $, $\ell = \{l,\Pi(l)\} $ appears first. Then $\varsigma_\ell=1 $ for all $\ell $ and 
\begin{align*}
\hat {\mc A}_{n(\bd n,m)}'(T;\bd n;\bm\lambda) &= \delta(\bm\lambda)\sum_{\substack{\sigma_\ell,\sigma_\ell'\in\Sigma  \\ \ell\in T}} \prod_{\ell \in T} \hat C(\lambda_\ell(\bm\lambda)) \prod_{l=1}^m \frac{n_l!}{(n_l-d^T(l))!} \\&\qquad\qquad\times    \hat w_{n_l}\big(-\lambda_{\{l,\Pi(l)\}}(\bm\lambda),(\lambda_{\{l,l'\}} (\bm\lambda))_{l'\in \Pi^{-1}(l)},\lambda^l_1,\ldots,\lambda^l_{n_l-d^T(l)}\big)
\end{align*}

\subsubsection{Recursive structure of the antisymmetrization}\label{algrec}

Let $\mc I = \{(1,1),\ldots,(m,n_m-d^T(m))\} $ be the set of indices of momenta $\lambda_\iota,\iota\in\mc I, $ that are passed as arguments to $ \hat {\mc A}_{n(\bd n,m)}'(T;\bd n) $. We write $\iota<\iota' $ for the order shown and abbreviate $\bm\lambda $ for the sequence $\lambda_{\iota} ,\iota\in\mc I $ in that order. $\lambda_\iota= (p_\iota,\sigma_\iota) $ is fixed in this section. Then
\begin{align*}
\hat {\mc A}_{n(\bd n,m)}(T;\bd n;\bm\lambda) &= \frac1{n(\bd n,m)!} \sum_{\pi\in \mc S(\mc I)} \sgn \pi \; \hat {\mc A}_{n(\bd n,m)}'(T;\bd n;\pi\bm\lambda)
\end{align*}
with $(\pi\bm\lambda)_\iota = \lambda_{\pi(\iota)} $. The exterior algebra $\mf G = \oplus_{n\geq 0} \big(\mb C^{\mc I}\big)^{\wedge n} $ has the standard basis $e_{\iota_1}\wedge\cdots\wedge e_{\iota_n},\iota_m\in\mc I,\iota_1<\cdots<\iota_n $. We define the linear functional $\int:\mf G\to\mb C $ through
\begin{align*}
\int e_{\iota_1}\wedge\cdots\wedge e_{\iota_n} &= \left\{ \begin{array}{ll}   1 &\text{if }n=n(\bd n,m),\iota_1<\cdots<\iota_n  \\ 0 & \text{otherwise}  \end{array}\right.
\end{align*}
In particular,
\begin{align*}
\sgn\pi &= \int e_{\pi((1,1))}\wedge\cdots\wedge e_{\pi((m,n_m-d^T(m)))},
\end{align*}
and therefore
\begin{align}\nonumber
\hat {\mc A}_{n(\bd n,m)}(T;\bd n;\bm\lambda) &= \frac1{n(\bd n,m)!}  \int \alpha(T;\bd n;\bm\lambda)\\
\alpha(T;\bd n;\bm\lambda)&=  \sum_{\iota_{(1,1)},\ldots,\iota_{(m,n_m-d^T(m))}\in\mc I}\hat {\mc A}_{n(\bd n,m)}'(T;\bd n;\lambda_{\iota_{(1,1)}},\ldots,\lambda_{ \iota_{(m,n_m-d^T(m))} }) \label{basictopform}\\&\qqquad\qquad\qquad\qquad \times e_{\iota_{(1,1)}}\wedge\cdots\wedge e_{\iota_{(m,n_m-d^T(m))}} .\nonumber
\end{align}
We write
\begin{align*}
\alpha(T;\bd n;\bm\lambda) &=  \delta(\bm\lambda)\prod_{l=1}^m \frac{n_l!}{(n_l-d^T(l))!} \sum_{\substack{\sigma_\ell,\sigma_\ell'\in\Sigma, \ell\in T \\ \sigma'_{\iota}\in\Sigma,\iota\in\mc I}} \alpha'(T;\bd n;\bm\lambda;\bm\sigma)
\end{align*}
where $\bm\sigma $ stands collectively for all $\sigma,\sigma' $ indices in the sum and 
\begin{align}
\alpha'(T;\bd n;\bm\lambda;\bm\sigma) &= \sum_{\substack{\iota_{(1,1)},\ldots,\iota_{(m,n_m-d^T(m))}\in\mc I \\ \text{s.t. }\sigma_{\iota_{(l,k)}} = \sigma_{(l,k)}'\,\forall (l,k)\in\mc I   }}\; \prod_{\ell\in T}\hat C(\lambda_\ell(\bm\lambda,\bm\sigma,\bm\iota)) \nonumber\\\label{recurform}&\qquad\qquad  \times \prod_{l=1}^m   \hat w_{n_l}\big((\varsigma_{(\ell,l)} \lambda_{\ell}(\bm\lambda,\bm\sigma,\bm\iota))_{\ell\ni l},\lambda_{\iota_{(l,1)}},\ldots,\lambda_{\iota_{(l,n_l-d^T(l))}}\big) \\&\qqquad\qquad\qquad\times e_{\iota_{(1,1)}}\wedge\cdots\wedge e_{\iota_{(m,n_m-d^T(m))}}   \nonumber\\
\lambda_\ell(\bm\lambda,\bm\sigma,\bm\iota) &= \Big(\varsigma_\ell \sum_{l\in\mf o(\ell)}\sum_{k=1}^{n_l-d^T(l)}p_{\iota_{(l,k)}} ,\sigma_\ell,\sigma_\ell'\Big) \nonumber
\end{align}
To obtain this expression, we have inserted the definition (\ref{exprap}) of $\hat {\mc A}_{n(\bd n,m)}' $ into (\ref{basictopform}) and wrote the unrestricted sum over $\bm\iota = \big(\iota_{(1,1)},\ldots,\iota_{(m,n_m-d^T(m))}\big) $ as a sum over possible spin assignments $\sigma'_\iota $ to all legs $\iota\in \mc I $, times a restricted sum over all choices of $\iota_k $ compatible with such an assignment (remember that $\lambda_{\iota} = (p_{\iota},\sigma_{\iota}),\,\iota\in\mc I $ are fixed):
\begin{align*}
\sum_{\iota_{(1,1)},\ldots,\iota_{(m,n_m-d^T(m))}\in\mc I} f(\bm\lambda,\bm\iota) = \sum_{\sigma_\iota'\in \Sigma,\iota\in\mc I}\sum_{\substack{\iota_{(1,1)},\ldots,\iota_{(m,n_m-d^T(m))}\in\mc I \\ \text{s.t. }\sigma_{\iota_{(l,k)}} = \sigma_{(l,k)}'\,\forall (l,k)\in\mc I   }} f(\bm\lambda,\bm\iota).
\end{align*} 
For any such spin assignment $\bm \sigma $, $\alpha'(T;\bd n;\bm\lambda;\bm\sigma) $ can now be build recursively as follows\footnote{The recursive structure emerges only after this, somewhat artificial, doubling of the spin variables $\sigma_{\iota},\sigma_\iota' $.}: Define the linear map $\mc C_\ell(\bm\lambda,\bm\sigma):\mf G\to\mf G $ by
\begin{align*}
\mc C_\ell(\bm\lambda,\bm\sigma)(e_{\iota_1}\wedge\cdots\wedge e_{\iota_k}) &= \hat C_{\sigma_\ell,\sigma_\ell'}\big(p_{\iota_1}+\cdots+p_{\iota_k}\big) e_{\iota_1}\wedge\cdots\wedge e_{\iota_k}
\end{align*}
and, for $l\in\{1,\ldots,m \}$ a vertex of $T$, the multilinear map $\mc W_l(\bm\lambda,\bm\sigma):\prod_{l'\in\Pi^{-1}(l)} \mf G\times \mb C^{n_l-d^T(l)}\to\mf G $ by 
\begin{align*}
\mc W_l(\bm\lambda,\bm\sigma)&\Big(\big(e_{\iota^{l'}_1}\wedge\cdots\wedge e_{\iota^{l'}_{k_{l'}}}\big)_{l'\in\Pi^{-1}(l)};e_{\iota_1},\ldots,e_{\iota_{n_l-d^T(l)}}\Big)\\ &= \hat w_{n_l}\Big( \big(-\textstyle{\sum}_{l'\in\Pi^{-1}(l)}\textstyle{\sum}_{k=1}^{k_{l'}} p_{\iota^{l'}_k} - \textstyle{\sum}_{k=1}^{n_l-d^T(l)}p_{\iota_k},\sigma_{\{l,\Pi(l)\}}'\big),\\&\qquad\qquad\big( (\textstyle{\sum}_{k=1}^{k_{l'}}p_{\iota^{l'}_k} ,\sigma_{\{l,l'\}})\big)_{l'\in\Pi^{-1}(l)} ,(p_{\iota_1},\sigma'_{(l,1)}),\ldots,(p_{\iota_{n_l-d^T(l)}},\sigma'_{(l,n_l-d^T(l))})\Big)\\&\qqquad\times \bigwedge_{l'\in\Pi^{-1}(l)}e_{\iota^{l'}_1}\wedge\cdots\wedge e_{\iota^{l'}_{k_{l'}}} \wedge e_{\iota_1}\wedge\cdots\wedge e_{\iota_{n_l-d^T(l)}}.
\end{align*}
Set, for $\iota\in\mc I $,
\begin{align*}
\alpha_{\iota}(\bm\sigma) &= \sum_{\iota'\in\mc I\text{ s.t. }\sigma_{\iota'}=\sigma_{\iota}'}e_{\iota'}.
\end{align*}
Note that 
\begin{align*}
\sum_{\sigma_\iota'\in\Sigma}\alpha_{\iota}(\bm\sigma) &= \sum_{\iota'\in\mc I}e_{\iota'}.
\end{align*}
Define recursively
\begin{align}\label{eqrecursion}
\alpha'(l;\bd n;\bm\lambda;\bm\sigma) &= \mc C_{\{l,\Pi(l)\}}(\bm\lambda,\bm\sigma) \Big[\mc W_l(\bm\lambda,\bm\sigma)\Big(\big(\alpha'(l';\bd n;\bm\lambda;\bm\sigma)\big)_{l'\in\Pi^{-1}(l)} ;\\&\qqquad\qquad\qquad\qquad\alpha_{(l,1)}(\bm\sigma),\ldots,\alpha_{(l,n_l-d^T(l))}(\bm\sigma) \Big)\Big].\nonumber
\end{align}
Then $\alpha'(r;\bd n;\bm\lambda;\bm\sigma)  = \alpha'(T;\bd n;\bm\lambda;\bm\sigma)  $ for $r$ the root of $T$ (by convention, $\mc C_{\{r,\Pi(r)\}} = \id $).

\subsection{Bounds on the recursion}

Let $\Vert\cdot\Vert $ be any norm on $\mf G $ such that $\Vert e_{(1,1)}\wedge\cdots\wedge e_{(m,n_m-d^T(m))}\Vert = 1 $. We are interested in $\Vert \alpha'(T;\bd n;\bm\lambda;\bm\sigma)\Vert $, which would imply a bound on $\vert \hat{\mc A}_{n(\bd n,m)}(T;\bd n;\bm\lambda)\vert $. One might attempt to prove a bound recursively from (\ref{eqrecursion}). That is, one might look for constants $c_{l} $ and $d_l $ such that
\begin{align*}
\Vert\alpha'(l;\bd n;\bm\lambda;\bm\sigma) \Vert &= \Big\Vert\mc C_{\{l,\Pi(l)\}}(\bm\lambda,\bm\sigma) \Big[\mc W_l(\bm\lambda,\bm\sigma)\Big(\big(\alpha'(l';\bd n;\bm\lambda;\bm\sigma)\big)_{l'\in\Pi^{-1}(l)};\\&\qqquad\qquad\qquad\qquad \alpha_{(l,1)}(\bm\sigma),\ldots,\alpha_{(l,n_l-d^T(l))}(\bm\sigma)\Big)\Big] \Big\Vert \\
&\leq c_l  \;\Big\Vert \mc W_l(\bm\lambda,\bm\sigma)\Big(\big(\alpha'(l';\bd n;\bm\lambda;\bm\sigma)\big)_{l'\in\Pi^{-1}(l)} ; \alpha_{(l,1)}(\bm\sigma),\ldots,\alpha_{(l,n_l-d^T(l))}(\bm\sigma)\Big) \Big\Vert\\
&\leq c_l d_l \; \prod_{l'\in\Pi^{-1}(l)} \Vert \alpha'(l';\bd n;\bm\lambda;\bm\sigma)\Vert  \prod_{m=1}^{n_l-d^T(l)}  \Vert\alpha_{(l,m)}(\bm\sigma)\Vert 
\end{align*}
This would imply 
\begin{align*}
\Vert \alpha'(T;\bd n;\bm\lambda;\bm\sigma)\Vert &= \Vert \alpha'(r;\bd n;\bm\lambda;\bm\sigma)\Vert \leq \prod_{l=1}^m c_ld_l \prod_{\iota\in\mc I}\Vert \alpha_\iota(\bm\sigma)\Vert
\end{align*}
If, for example, had we chosen a norm $\Vert\cdot\Vert $ that equals an $\ell^p $ norm on one forms, we would get
\begin{align}\label{boundfundforms}
\sum_{ \sigma'_{\iota}\in\Sigma,\iota\in\mc I}\prod_{\iota\in\mc I}\Vert \alpha_\iota(\bm\sigma)\Vert&\leq \vert \Sigma\vert^{\frac{p-1}{p} n(\bd n,m)} \cdot n(\bd n,m)^{\frac 1p n(\bd n,m)}.
\end{align}
If the constants $c_l,d_l $ were uniform in $ n(\bd n,m) $, this would give a bound of the type (\ref{improbound}) as long as $p\geq 2 $.\\
Without further assumptions, the best possible constant $c_l $ is the operator norm of $\mc C_{\{l,\Pi(l)\}} $. Similarly, $d_l $ can be thought of as the norm of the multilinear map $\mc W_l $. In the simplified case where the kernel $\hat w_{n_l} $ factorizes as
\begin{align*}
\hat w_{n_l}\Big(\lambda_{\{l,\Pi(l)\}} ,&\big(\lambda_{\{l,l'\}}\big)_{l'\in\Pi^{-1}(l)} ,\lambda_1,\ldots,\lambda_{n_l-d^T(l)}\Big) \\&= w_{\{l,\Pi(l)\}}(\lambda_{\{l,\Pi(l)\}} ) \prod_{l'\in\Pi^{-1}(l)}w_{\{l,l'\}}(\lambda_{\{l,l'\}}) \cdot w_1(\lambda_1)\cdots w_{n_l-d^T(l)}(\lambda_{n_l-d^T(l)}),
\end{align*}
we have 
\begin{align*}
\mc W_l(\bm\lambda,\bm\varsigma)&\Big(\big(\alpha'(l';\bd n;\bm\lambda;\bm\sigma)\big)_{l'\in\Pi^{-1}(l)} ;\alpha_{(l,1)}(\bm\sigma),\ldots,\alpha_{(l,n_l-d^T(l))}(\bm\sigma) \Big)\\ &= \mc W_{\{l,\Pi(l)\}}\Big[\bigwedge_{l'\in\Pi^{-1}(l)} \mf W_{\{l,l'\}} \big[\alpha'(l';\bd n;\bm\lambda;\bm\sigma)\big]  \,\wedge \, \mc W_1\big[\alpha_{(l,1)}(\bm\sigma)\big]\wedge \\&\qqquad\qqquad\cdots\wedge \mc W_{n_l-d^T(l)}\big[\alpha_{(l,n_l-d^T(l))}(\bm\sigma) \big]\Big],
\end{align*}
where $ \mc W_{\ell} , \mc W_k ,\ell\ni l,k=1,\ldots,n_l-d^T(l),$ are multiplication operators on $\mf G $ defined similarly to $\mc C_\ell $. In this simplified case, we would therefore have
\begin{align*}
d_l\leq \Vert \mc W_{\{l,\Pi(l)\}}\Vert_{op} \prod_{l'\in\Pi^{-1}(l)} \Vert \mc W_{\{l,l'\}}\Vert_{op} \cdot \Vert \mc W_1\Vert_{op}\cdots \Vert \mc W_{n_l-d^T(l)}\Vert_{op} \cdot c_{\wedge},
\end{align*}
where $c_\wedge $ is the submultiplicativity constant of the wedge product,
\begin{align}\label{eqsubmult}
\Vert \alpha_1\wedge\cdots\wedge \alpha_n\Vert \leq c_\wedge \Vert \alpha_1\Vert \cdots \Vert \alpha_n\Vert.
\end{align}
For many natural norms, $c_\wedge $ is $n(\bd n,m) $ - dependent and so large that a recursive bound derived along the lines just explained is \emph{not} any more of the type (\ref{improbound}). In general, the inequality (\ref{eqsubmult}) seems to have gotten surprisingly little attention in the literature. We start with a simple discussion of this issue.

\subsubsection{The submultiplicativity constant of the wedge product}

We denote by $\Vert \cdot\Vert_p $ the $\ell^p $ norm on $\mf G $, i.e. if
\begin{align*}
\alpha &= \sum_{\iota_1<\cdots<\iota_k} \alpha(\iota_1,\ldots,\iota_k) e_{\iota_1}\wedge\cdots\wedge e_{\iota_k},
\end{align*}
then
\begin{align*}
\Vert \alpha\Vert_p &= \left[ \sum_{\iota_1<\cdots<\iota_k} \vert\alpha(\iota_1,\ldots,\iota_k)\vert ^p \right]^{\frac1p}.
\end{align*}
If
\begin{align*}
\alpha_m &= \sum_{\iota_1<\cdots<\iota_{k_m}} \alpha_m(\iota_1,\ldots,\iota_{k_m}) e_{\iota_1}\wedge\cdots\wedge e_{\iota_{k_m}},
\end{align*}
for $m=1,\ldots,n $, then
\begin{align*}
\alpha_1\wedge\cdots\wedge \alpha_n &= \sum_{\iota_1<\cdots<\iota_k} \alpha(\iota_1,\ldots,\iota_k) e_{\iota_1}\wedge\cdots\wedge e_{\iota_k}
\end{align*}
with $k=k_1+\cdots+k_n $ and
\begin{align*}
\alpha(\iota_1,\ldots,\iota_k)&= \sum_{\pi\in \mc S(k_1,\ldots,k_n)} \sgn\pi\, \alpha_1(\iota_{\pi(1)},\ldots,\iota_{\pi(k_1)})\cdots\alpha_n(\iota_{\pi(k_1+\cdots+k_{n-1}+1)},\ldots,\iota_{\pi(k_1+\cdots+k_{n})}),
\end{align*}
where $\mc S(k_1,\ldots,k_n) $ is the set of all permutations of $\{1,\ldots,k\} $ that preserve the order of the subsets $\{1,\ldots,k_1\},\ldots,\{k_1+\cdots+k_{n-1} ,\ldots, k_1+\cdots+k_{n}\} $ (shuffle permutations). Clearly, $\vert\mc S(k_1,\ldots,k_n)\vert = \binom{k}{k_1 \cdots\,k_n} $, and by destroying possible cancellations due to $\sgn\pi $, we obtain the bound
\begin{align}\label{simplesubmult}
\Vert \alpha_1\wedge\cdots\wedge\alpha_n\Vert_p \leq \binom{k}{k_1 \cdots\,k_n}^{\frac{p-1}{p}} \Vert \alpha_1\Vert_p\cdots \Vert \alpha_n\Vert_p.
\end{align}
This bound is not sharp, and more optimal estimates seem to be related to the poorly developed theory of tensor rank. If, for  instance, 
\begin{align*}
\alpha &= \alpha\ob 1\wedge\cdots\wedge \alpha\ob k,\;\alpha\ob j\in \mb C^{\mc I}
\end{align*}
has rank $1$, then some spectral properties of the operator $\delta_\alpha=\alpha\wedge :\mf G\to\mf G $ are easy to understand. Indeed, $\delta_\alpha^*\delta_\alpha $ is an orthogonal projection (modulo normalization) with respect to the $\ell^2 $ inner product, as can be seen inductively using the anticommutator
\begin{align}\label{eqccr}
\{\delta_{\alpha\ob j},\delta_{\alpha\ob j}^*\} &= \Vert \alpha\ob j\Vert_2^2,
\end{align}
or, equivalently, follows from the fact that $\delta_\alpha^*\delta_\alpha = \prod_{j=1}^k \Vert \tilde\alpha\ob j\Vert_2^2 n_j  $, where $\tilde\alpha\ob j $ are the orthogonal forms generated from $\alpha\ob j $ by the Gram-Schmidt algorithm, and $n_j $ is the fermionic number operator associated to $\tilde\alpha\ob j $, whose operator norm is bounded by $1$. In particular,
\begin{align*}
\Vert \alpha\wedge\alpha'\Vert_2\leq \Vert \alpha\Vert_2\cdot\Vert\alpha'\Vert_2
\end{align*}
if $\alpha $ has rank $1$. The anticommutation relations (\ref{eqccr}) could also be used to investigate the spectral radius of $\delta_\alpha $ for $ \alpha$ of higher rank. In the case of a two form $\alpha\in \mb C^{\mc I}\wedge \mb C^{\mc I} $, the rank decomposition is the spectral theorem for antisymmetric matrices, and for this special case, optimal constants in (\ref{eqsubmult}) have been obtained explicitly in \cite{iwaniec2004hadamard} (see also \cite{luo2014}). They grow slower than the ones of (\ref{simplesubmult}), but the investigation is too incomplete to be applied to the questions of this paper.

\subsubsection{Fourier transform and the standard bound}\label{standardbound}

The standard strategy to avoid submultiplicativity constants is to work with the $\ell^2 $ norm and decompose the form $\alpha $ of (\ref{basictopform}) into a sum of rank $1$ forms by reversing the Fourier transform. Indeed, if $\alpha = \alpha\ob1\wedge\cdots \wedge \alpha\ob k ,\alpha\ob j\in\mb C^{\mc I} $ has rank $1$, then 
\begin{align*}
\mc C_\ell(\bm\lambda,\bm\sigma) [\alpha ] &= \sum_{x\in\mb T} C_{\sigma_\ell,\sigma_\ell'}(x) \mc E(x;\bm\lambda)[\alpha\ob1]\wedge\cdots \wedge\mc E(x;\bm\lambda)[\alpha\ob k]
\end{align*}
where $\mc E(x;\bm\lambda):\mb C^{\mc I}\to\mb C^{\mc I} $ is the multiplication operator $\mc E(x;\bm\lambda)[e_\iota] = e^{ixp_\iota}e_\iota $. Similarly, if $\alpha_{\{l,l'\}} = \alpha_{\{l,l'\}}\ob1\wedge\cdots\wedge \alpha_{\{l,l'\}}\ob{k^{l'}} $, $l'\in\Pi^{-1}(l) $, all have rank $1$, then
\begin{align}\nonumber
\mc W_l(\bm\lambda,\bm\sigma)\Big(&\big(\alpha_{\{l,l'\}}\big)_{l'\in\Pi^{-1}(l)};\alpha_{(l,1)},\ldots,\alpha_{(l,n_l-d^T(l))} \Big)\\\nonumber &= \sum_{\substack{x^{l'}\in\mb T,l'\in\Pi^{-1}(l) \\ x_0,\ldots,x_{n_l-d^T(l)}\in\mb T  }} w_{n_l}\Big((x_0,\sigma'_{\{l,\Pi(l)\}}),\big((x^{l'},\sigma_{\{l,l'\}})\big)_{l'\in\Pi^{-1}(l)},\\[-20pt]&\qqquad\qquad\qquad(x_1,\sigma_{(l,1)}'),\ldots,(x_{n_l-d^T(l)},\sigma'_{(l,n_l-d^T(l))})\Big)\label{fouriersimple}\\[8pt]\nonumber& \qquad\qquad\times \bigwedge_{l'\in\Pi^{-1}(l)}\mc E(x^{l'}-x_0;\bm\lambda)[\alpha_{\{l,l'\}}\ob1]\wedge\cdots \wedge\mc E(x^{l'}-x_0;\bm\lambda)[\alpha_{\{l,l'\}}\ob k]\\\nonumber&\qquad\qquad\qquad\wedge  \mc E(x_1-x_0;\bm\lambda)[\alpha_{(l,1)}(\bm\sigma)]\wedge\cdots \wedge\mc E(x_{n_l-d^T(l)}-x_0;\bm\lambda)[ \alpha_{(l,n_l-d^T(l))}(\bm\sigma)].
\end{align}
Applying this to the induction, we see that
\begin{align*}
\alpha(T;\bd n;\bm\lambda) &=  \delta(\bm\lambda)\prod_{l=1}^m \frac{n_l!}{(n_l-d^T(l))!} \sum_{\substack{\sigma_\ell,\sigma_\ell'\in\Sigma, \ell\in T \\ \sigma'_{\iota}\in\Sigma,\iota\in\mc I}} \sum_{\substack{x_\ell\in\mb T,\ell\in T\\ x_\iota\in\mb T,\iota\in\mc I \\ x_l\in\mb T,l=1,\ldots,m}}\prod_{\ell\in T}C_{\sigma_\ell,\sigma_\ell'}(x_\ell) \\&\qquad\qquad\times\prod_{l=1}^m w_{n_l}\Big((x_l,\sigma'_{\{l,\Pi(l)\}}),\big((x^{\{l,l'\}},\sigma_{\{l,l'\}})\big)_{l'\in\Pi^{-1}(l)},\\[-10pt]&\qqquad\qquad\qquad(x_{(l,1)},\sigma_{(l,1)}'),\ldots,(x_{(l,n_l-d^T(l))},\sigma'_{(l,n_l-d^T(l))})\Big)\\
&\qquad\qquad\times \bigwedge_{\iota\in\mc I} \mc E(x\ob{\iota,T}(\bd x);\bm\lambda)[\alpha_\iota]
\end{align*}
where $\bd x $ stands for the collection of $x $'s in the sum and $x\ob{\iota,T}(\bd x)\in\mb T $ is a function of $\bd x $ depending only on $\iota $ and $T$. Since each $ \mc E(x\ob{\iota,T}(\bd x);\bm\lambda)[\alpha_\iota] $ is a one form and therefore of rank $1$, it follows from this that
\begin{align}\nonumber
\Vert \alpha(T;\bd n;\bm\lambda) \Vert_2 &\leq  \Vert C\Vert_1^{m-1}\prod_{l=1}^m \frac{n_l!}{(n_l-d^T(l))!} \Vert w_{n_l}\Vert_1 \cdot \sup_{\bd x}\prod_{\iota\in\mc I} \big\Vert\mc E(x\ob{\iota,T}(\bd x);\bm\lambda)[\alpha_\iota] \big\Vert_2\\
&\leq  \Vert C\Vert_1^{m-1}\prod_{l=1}^m d^T(l)!2^{n_l} \Vert w_{n_l}\Vert_1 \cdot n(\bd n,m)^{\half n(\bd n,m)}\label{fourierbound}
\end{align}
and therefore the standard bound 
\begin{align*}
\Vert\hat{\mc A}_n(T;\bd n) \Vert_\infty \leq\delta_{n=n(\bd n,m)} \cdot  \frac{ n(\bd n,m)^{\half n(\bd n,m)}}{n(\bd n,m)!} \cdot  \Vert  C\Vert_1^{m-1} \prod_{l=1}^m d^T(l)!2^{n_l}\Vert   w_{n_l}\Vert_1, 
\end{align*}
cf (\ref{improbound}). In some important applications, $\Vert C\Vert_1 $ is much larger than the factor $\Vert \hat C\Vert_\infty $ that was obtained earlier for the same bound without the $ \frac{n^{\frac n2}}{n!} $ improvement from antisymmetrization.

\subsubsection{The norm of multiplication operators}

The bound of the previous section can also be obtained by using on $\mf G $ the norm \cite{federer2014geometric}
\begin{align*}
\Vert \alpha\Vert &= \inf\Big\{\sum_{p=1}^r\Vert \alpha_p\Vert_2,\, \text{rank }\alpha_p = 1, \sum_{p=1}^r \alpha_p = \alpha \Big\}.
\end{align*}
This norm coincides with $\Vert\cdot\Vert_2 $ on $1 $ forms and $n(\bd n,m) $ - forms. Clearly, $\Vert \alpha\wedge\alpha'\Vert\leq \Vert\alpha\Vert\cdot\Vert\alpha'\Vert $ for any $\alpha,\alpha'\in\mf G $, and (\ref{fourierbound}) then follows from induction and the operator norm bounds
\begin{align*}
\sup_{\Vert\alpha\Vert\leq 1}\Vert \mc C_\ell(\bm\lambda,\bm\sigma)[\alpha]\Vert  &\leq \sum_{x\in\mb T} \vert C_{\sigma_\ell,\sigma_\ell'}(x)\vert\\
\sup_{\Vert \alpha_1\wedge \cdots\wedge \alpha_{n_l}\Vert\leq 1}\Big\Vert \mc W_l(\bm\lambda,\bm\sigma)\big(\alpha_1,\ldots,\alpha_{n_l}\big)\Big\Vert  &\leq \sum_{x_1,\ldots,x_{n_l}\in\mb T} \big\vert w_{n_l} \big((x_1,\sigma_{\{l,\Pi(l)\}}'),\ldots,(x_{n_l},\sigma'_{(l,n_l-d^T(l))})\big)\big\vert
\end{align*}
By way of contrast, the $\ell^p $ operator norms are compatible with the perturbative estimate (\ref{perestimate}):
\begin{align}
\sup_{\Vert\alpha\Vert_p\leq 1}\Vert \mc C_\ell(\bm\lambda,\bm\sigma)[\alpha]\Vert_p  &\leq \Vert \hat C\Vert_\infty \label{propbound}\\
\sup_{\Vert \alpha_1\wedge \cdots\wedge \alpha_{n_l}\Vert_p\leq 1}\Big\Vert \mc W_l(\bm\lambda,\bm\sigma)\big(\alpha_1,\ldots,\alpha_{n_l}\big)\Big\Vert_p  &\leq \Vert \hat w_{n_l}\Vert_\infty. \label{intbound}
\end{align}
Even for $ p=2$, these bounds can not be used in a simple induction, since the application of the latter yields
\begin{align*}
\Big\Vert\mc W_l(\bm\lambda,\bm\sigma)\Big(\big(&\alpha'(l';\bd n;\bm\lambda;\bm\sigma)\big)_{l'\in\Pi^{-1}(l)} ,  \alpha_{(l,1)}(\bm\sigma),\ldots,\alpha_{(l,n_l-d^T(l))}(\bm\sigma) \Big)\Big\Vert_2 \\&\leq  \Vert \hat w_{n_l}\Vert_\infty   \prod_{k=1}^{n_l-d^T(l)} \Vert \alpha_{(l,k)}(\bm\sigma)\Vert_2\cdot \Bigg\Vert\bigwedge_{l'\in\Pi^{-1}(l)} \alpha'(l';\bd n;\bm\lambda;\bm\sigma)  \Bigg\Vert_2,
\end{align*}
which can only be processed further by using on the last factor the generally insufficient bound (\ref{simplesubmult}) (unless further structure of the forms $ \alpha'(l';\bd n;\bm\lambda;\bm\sigma)  $ is used).

\subsubsection{Modifications to the recursive structure}

One way of thinking about the standard Fourier transform bound is as a modification to the recursive structure that exhibits at every step the new rank $1$ contributions. For example, if $\ell_1,\ldots,\ell_d $ are the branches at some vertex of $T$, $\alpha_1,\ldots,\alpha_d\in\mf G $, and $\alpha_c\ob j\in\mb C^{\mc I} $, $j=1,\ldots,k_c,c=1,\ldots,d $, are one forms, then
\begin{align*}
\bigwedge_{c=1}^d\mc C_{\ell_c}(\bm\lambda,\bm\sigma)\big[\alpha_c\wedge\alpha_c\ob1\wedge\cdots\wedge \alpha_c\ob{k_c}\big] &= \sum_{x_1,\ldots,x_d\in\mb T}\prod_{c=1}^dC_{\sigma_{\ell_c},\sigma_{\ell_c}'}(x_c)\\&\qquad\qquad\qquad\times\bigwedge_{c=1}^d \alpha_c'\wedge\mc E(x_c,\bm\lambda)[\alpha_c\ob1]\wedge\cdots\wedge \mc E(x_c,\bm\lambda)[\alpha_c\ob{k_c}]
\end{align*}
for modified $\alpha_c'\in\mf G  $. One might be interested in similar modification schemes that work directly in momentum space. As an example, for $d=2 $, consider the integration by parts identity\footnote{We assume here for simplicity that $\hat C $ is the restriction to $\mb T^* $ of a smooth compactly supported continuum expression.}
\begin{align}\label{modrec}
\mc C_{\ell}(\bm\lambda,\bm\sigma)[\alpha] \wedge \mc C_{\ell'}(\bm\lambda,\bm\sigma)[\alpha'] &= \int_{\mb R^D} \ud t\;\hat C_{\sigma_\ell,\sigma_\ell'}(t)\;\mc C_{\ell'}'(t;\bm\lambda,\bm\sigma)\big[\alpha \wedge  \mc X(t;\bm\lambda) [\alpha'] \big]\\&\qquad + \int_{\mb R^D} \ud t\;  \bm\nabla\hat C_{\sigma_\ell,\sigma_\ell'} (t)\; \mc C_{\ell'}(t;\bm\lambda,\bm\sigma)\big[\alpha \wedge  \mc X(t;\bm\lambda)[\alpha'] \big]\nonumber
\end{align}
where $\mc C_{\ell} (t;\bm\lambda,\bm\sigma),\mc C_{\ell}'(t;\bm\lambda,\bm\sigma) $ and $ \mc X(t;\bm\lambda) $ are linear with
\begin{align*}
\mc C_{\ell}(t;\bm\lambda,\bm\sigma)[e_{\iota_1}\wedge\cdots\wedge e_{\iota_k}] &= \hat C_{\sigma_\ell,\sigma_\ell'}(p_{\iota_1}+\cdots+p_{\iota_k}-t)e_{\iota_1}\wedge\cdots\wedge e_{\iota_k}\\
\mc C_{\ell}'(t;\bm\lambda,\bm\sigma)[e_{\iota_1}\wedge\cdots\wedge e_{\iota_k}] &= \bm\nabla\hat C_{\sigma_\ell,\sigma_\ell'}(p_{\iota_1}+\cdots+p_{\iota_k}-t)e_{\iota_1}\wedge\cdots\wedge e_{\iota_k}\\
\mc X(t;\bm\lambda)[e_{\iota_1}\wedge\cdots\wedge e_{\iota_k}] &=  \chi\big(t\leq p_{\iota_1}+\cdots+p_{\iota_k}\big)e_{\iota_1}\wedge\cdots\wedge e_{\iota_k}.
\end{align*}
Here, $\chi(t\leq t'),t,t'\in\mb R^D $ is the indicator function of $t_1\leq t'_1,\ldots, t_D\leq t'_D $ and $\bm\nabla = \pderi{t_1}\cdots\pderi{t_D} $. In particular, the $\ell^p $ operator norms of $\mc C_{\ell} (t;\bm\lambda,\bm\sigma),\mc C_{\ell}'(t;\bm\lambda,\bm\sigma) $ and $ \mc X(t;\bm\lambda) $ are bounded by $\Vert \hat C\Vert_\infty,\,\Vert\bm\nabla\hat C\Vert_\infty $ and $1$, respectively. If $\alpha $ were rank $1$, this would give
\begin{align}\label{modbound}
\big\Vert \mc C_{\ell}(\bm\lambda,\bm\sigma)[\alpha] \wedge \mc C_{\ell'}(\bm\lambda,\bm\sigma)[\alpha'] \big\Vert_2 \leq \big(\Vert \hat C\Vert_1\cdot \Vert \bm\nabla\hat C\Vert_\infty + \Vert  \bm\nabla\hat C\Vert_1\cdot \Vert\hat C\Vert_\infty\big) \Vert \alpha\Vert_2\cdot \Vert \alpha'\Vert_2
\end{align}
More general integration by parts schemes for higher degrees $d$ or deeper levels of recursion are possible, but shall not be investigated in this paper.

\begin{rem}

The above arguments have to be slightly modified for nontrivial interpolation parameters $s_\ell,\ell\in T $. Fix the $s_\ell $ and write $s^T = a^*a $ with $a\in \mb C^{m\times m} $ symmetric positive definite and $\sum_{l'} \vert a_{l,l'}\vert^2 = 1 $. The Grassmann Gaussian integral becomes
\begin{align*}
\int \psi^{l_1}(\xi_1)\wedge\cdots \wedge\psi^{l_{2n}}&(\xi_{2n})\ud\mu_{C\otimes s^T}(\psi^1,\ldots,\psi^m) \\&= \frac1{2^nn!}\sum_{\substack{ \pi\in\mc S_{2n} \\ k_1,\ldots,k_n \in\{1,\ldots,m\} }}\sgn\pi \prod_{i=1}^n C(\xi_{\pi(2i-1)},\xi_{\pi_{2i}}) \ol{ a}_{k_i,l_{\pi(2i-1)}} a_{k_i,l_{\pi(2i)}}
\end{align*}
and in the estimate (\ref{loopbound}), we replace
\begin{align*}
\int_{\mb L^*\times\Sigma} \prod_{m=1}^n&\ud\lambda_m \big\vert\hat{\mc A}_{2n}(T;\bd n;\lambda_1,-\lambda_1,\ldots,\lambda_n,-\lambda_n)\big\vert \\& \longrightarrow \sum_{k_1,\ldots,k_n\in\{1,\ldots,m\}}\int_{\mb L^*\times\Sigma} \prod_{m=1}^n\ud\lambda_m \big\vert\hat{\mc A}_{2n}(T;\bd n;\lambda_1,-\lambda_1,\ldots,\lambda_n,-\lambda_n,k_1,\ldots,k_n)\big\vert
\end{align*}
and need to bound 
\begin{align*}
\sum_{k_1,\ldots,k_n\in\{1,\ldots,m\}} \Vert \hat{\mc A}_{2n}(T;\bd n;\bd k)\Vert_\infty
\end{align*}
with the $\infty $ - norm with respect to the $\lambda $ dependence. Here, $ \hat{\mc A}_{2n}(T;\bd n;\bd k) $ is constructed just like $\hat{\mc A}_{2n}(T;\bd n) $ in section \ref{algrec}, but with $\alpha_\iota(\bm\sigma) $ there replaced by 
\begin{align*}
\alpha_\iota(\bm\sigma,\bd k)  &= \sum_{\iota'\in\mc I\text{ s.t. }\sigma_{\iota'}=\sigma'_\iota} a(\iota,\iota') e_{\iota'}\\
 a(\iota,\iota')&= \left\{\begin{array}{ll} \ol a_{k_il_{\iota'}} & \text{if }\iota\text{ is the }(2i-1)\text{th element of }\mc I\\ a_{k_il_{\iota'}} & \text{if }\iota\text{ is the }2i\text{th element of }\mc I\end{array}\right.
\end{align*}
The bounds go through unchanged except for the estimate on $\Vert \alpha_\iota(\bm\sigma,\bd k)\Vert $, where the bound
\begin{align*}
\sum_{\bm\sigma}\prod_{\iota\in\mc I} \Vert \alpha_\iota(\bm\sigma)\Vert_2\leq \vert\Sigma\vert^{\half n(\bd n,m)} n(\bd n,m)^{\half n(\bd n,m)}
\end{align*}
of (\ref{boundfundforms}) has to be replaced by
\begin{align*}
\sum_{\bd k,\bm\sigma}\prod_{\iota\in\mc I} \Vert \alpha_\iota(\bm\sigma,\bd k)\Vert_2&\leq \vert\Sigma\vert^{\half n(\bd n,m)} \sum_{\bd k}\prod_{\iota\in\mc I} \Big\Vert \sum_{\bm\sigma}\alpha_\iota(\bm\sigma,\bd k)\Big\Vert_2\\
&=  \vert\Sigma\vert^{\half n(\bd n,m)}\sum_{k_1,\ldots,k_{\half n(\bd n,m)} \in\{1,\ldots,m\}} \prod_{i=1}^{\half n(\bd n,m)} \Big\Vert \sum_{\iota\in \mc I}a_{k_il_\iota}e_\iota \Big\Vert_2^2\\
&=  \vert\Sigma\vert^{\half n(\bd n,m)} \left[\sum_{k=1}^m \sum_{\iota\in\mc I} \vert a_{kl_\iota}\vert^2\right]^{\half n(\bd n,m)} = \vert\Sigma\vert^{\half n(\bd n,m)} n(\bd n,m)^{\half n(\bd n,m)}
\end{align*}
which is the same as before.
\end{rem}

\section{Sample applications}\label{applications}

In this section, we apply the strategy described so far to derive two sample theorems about nonperturbative bounds on terms in the tree expansion corresponding to certain classes of trees. Generally, these classes are defined by restricting either the type or the number of branches. Other bounds similar to the ones stated, perhaps tailored towards a specific application, could also be derived, but even though we include a simple minded comparison between our new bounds and the standard ones in the case of single scale nonrelativistic Fermions, we will not venture here into any serious application of our ideas.

\subsection{Two classes of trees with improved bounds}

We discuss the simplest classes of trees that illustrate the use of the recursion (\ref{eqrecursion}), either using the submultiplicativity constant of (\ref{simplesubmult}) or the modification to the recursive structure of (\ref{modrec}).

\subsubsection{Trees with a bounded number of branches}

Our first bound uses only the momentum space norms of the perturbative bound, but is summable over the class of trees with a finite number of branches, i.e. such that $\sum_l \big[d^T(l)-2\big]\vee0\leq \text{ const} $.

\begin{thm}\label{thmsparse}
Let $T $ be a tree on $\{1,\ldots,m\} $ and $n_1,\ldots, n_m\geq 2 $. Let $A(T;n_1,\ldots,n_m) $ be the amplitude of the tree expansion (\ref{treeexp}) corresponding to $T$, with vertex $l\in\{1,\ldots,m\} $ having $n_l $ legs. Denote by $n=  \sum_l \frac{n_l}2-(m-1) $ the number of loop lines. Then,
\begin{align*}
\vert A(T;n_1,\ldots,n_m)\vert \leq \vert \mb T\vert \cdot \frac{n^n}{n!} \cdot \Vert \hat C\Vert_\infty^{m-1} \Vert \hat C\Vert_1^n \prod_{l=1}^m d^T(l)!2^{n_l}\big([d^T(l)-1]\vee1\big)^{n} \cdot \vert\Sigma\vert^{ n_l}\Vert \hat w_{n_l}\Vert_\infty.
\end{align*}
\end{thm}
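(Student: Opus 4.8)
The plan is to feed the perturbative operator norm bounds (\ref{propbound}) and (\ref{intbound}) into the recursion (\ref{eqrecursion}) using the $\ell^2 $ norm, and to control the resulting submultiplicativity by exploiting that the external leg forms $\alpha_\iota(\bm\sigma) $ are one-forms, hence of rank $1$. Writing $n=n(\bd n,m)/2 $ for the number of loop lines (so that $n(\bd n,m)=2n $), I first note that in (\ref{loopbound}) the kernel $\hat{\mc A}_{n'}(T;\bd n) $ vanishes unless $n'=2n $, so that the quantity $A(T;n_1,\ldots,n_m)=\int \mc A(T;\bd n;\psi)\ud\mu_C(\psi) $ (up to the irrelevant sign $\varpi $) obeys $|A(T;\bd n)|\leq |\mb T|\,\frac{(2n)!}{2^n n!}\,\Vert\hat C\Vert_1^n\,\Vert\hat{\mc A}_{2n}(T;\bd n)\Vert_\infty $. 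Since $\mc I $ has $2n $ elements, the top exterior power of $\mb C^{\mc I} $ is one dimensional, so $\hat{\mc A}_{2n}=\frac1{(2n)!}\int\alpha $ with $\alpha $ a top degree form, and $|\int\alpha|=\Vert\alpha\Vert_2 $. Thus everything reduces to an $\ell^2 $ bound on $\alpha $ assembled from the recursion.

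Next I would root $T $ at a \emph{leaf} $r $ and run (\ref{eqrecursion}) in the $\ell^2 $ norm. At each vertex I peel off $\mc C_{\{l,\Pi(l)\}} $ using (\ref{propbound}), contributing one factor $\Vert\hat C\Vert_\infty $ for each of the $m-1 $ non root vertices (the root giving $\id $), and bound $\mc W_l $ using (\ref{intbound}) by $\Vert\hat w_{n_l}\Vert_\infty $ times the $\ell^2 $ norm of the wedge of its inputs. Those inputs are the $d^T(l)-1 $ child forms $\alpha'(l') $ together with the $n_l-d^T(l) $ one-forms $\alpha_\iota(\bm\sigma) $. The crucial observation is that the one-forms have rank $1 $, so by the rank $1 $ bound stated after (\ref{eqccr}) they may be wedged in at no cost; only the child forms incur the genuine submultiplicativity constant of (\ref{simplesubmult}) with $p=2 $, namely $\binom{c_l}{(s(l'))_{l'}}^{\half} $, where $s(l') $ is the number of external legs of the subtree rooted at $l' $ and $c_l=\sum_{l'\in\Pi^{-1}(l)}s(l') $.

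The combinatorial heart of the argument is then to bound these per vertex multinomials. Using $\binom{c_l}{(s(l'))_{l'}}\leq (d^T(l)-1)^{c_l} $ (a multinomial is at most the number of its parts raised to the total) together with $c_l\leq s(l)\leq 2n $ (the external legs of any subtree lie among all $2n $ of them), each factor is at most $([d^T(l)-1]\vee1)^{n} $. Rooting at a leaf guarantees $|\Pi^{-1}(l)|=d^T(l)-1 $ for every $l\neq r $ and a trivial factor at $r $, so the product over the tree is at most $\prod_l([d^T(l)-1]\vee1)^n $, which is precisely the branch dependent factor in the statement. Iterating the recursion then gives $\Vert\alpha'(T;\bm\sigma)\Vert_2\leq \Vert\hat C\Vert_\infty^{m-1}\prod_l\Vert\hat w_{n_l}\Vert_\infty\prod_l([d^T(l)-1]\vee1)^n\prod_{\iota\in\mc I}\Vert\alpha_\iota(\bm\sigma)\Vert_2 $.

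Finally I would carry out the spin sum and the bookkeeping. The tree edge spins $\sigma_\ell,\sigma_\ell' $ do not enter $\prod_\iota\Vert\alpha_\iota(\bm\sigma)\Vert_2 $, so summing them yields $|\Sigma|^{2(m-1)} $, while (\ref{boundfundforms}) with $p=2 $ bounds the sum over the leg spins by $|\Sigma|^n(2n)^n $. Reinstating the prefactor $\prod_l\frac{n_l!}{(n_l-d^T(l))!} $ coming from $\alpha $, together with the $\frac1{(2n)!} $ and the loop factor $\frac{(2n)!}{2^n n!} $, the factorials combine as $\frac{(2n)!}{2^n n!}\cdot\frac{(2n)^n}{(2n)!}=\frac{n^n}{n!} $; then $\frac{n_l!}{(n_l-d^T(l))!}=\binom{n_l}{d^T(l)}d^T(l)!\leq 2^{n_l}d^T(l)! $ and $|\Sigma|^{2(m-1)+n}\leq|\Sigma|^{\sum_l n_l} $ (since $\sum_l n_l=2n+2(m-1) $), and the claimed bound drops out. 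I expect the only genuine subtlety to be the rank $1 $ bookkeeping inside the wedge together with the choice of a leaf as root, which is exactly what replaces the naive $(2n)!^{\half} $ submultiplicativity by the branch-localized factor $\prod_l([d^T(l)-1]\vee1)^n $; the remaining steps are routine estimates.
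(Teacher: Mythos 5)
Your proposal is correct and follows essentially the same route as the paper's proof: reduction via (\ref{loopbound}) and the top-degree $\ell^2$ identity, rooting $T$ at a leaf, running the recursion (\ref{eqrecursion}) with (\ref{propbound}) and (\ref{intbound}) while wedging in the one-forms $\alpha_\iota(\bm\sigma)$ at no cost via the rank-$1$ bound, applying (\ref{simplesubmult}) with $p=2$ only to the child forms and bounding the resulting multinomials by $\big([d^T(l)-1]\vee 1\big)^{n}$, followed by the identical spin and factorial bookkeeping $\frac{(2n)!}{2^n n!}\cdot\frac{(2n)^n}{(2n)!}=\frac{n^n}{n!}$ and $\frac{n_l!}{(n_l-d^T(l))!}\leq 2^{n_l}d^T(l)!$. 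The only divergence is cosmetic: you silently set $s^T=\bd 1$ and use (\ref{boundfundforms}) for the leg-spin sum, whereas the paper bounds each $\Vert\alpha_\iota(\bm\sigma)\Vert_2\leq(2n)^{\half}$ directly and defers general $\bd s$ to its closing Remark --- neither difference is substantive.
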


\begin{proof}

$A(T;n_1,\ldots,n_m) $ is the integral over $s_\ell\in[0,1],\ell\in T $ of a function $A(T;\bd s;n_1,\ldots,n_m) $, see (\ref{treeexp}). Therefore $\vert A(T;n_1,\ldots,n_m)\vert \leq  \sup_{\bd s}\vert A(T;\bd s;n_1,\ldots,n_m)\vert $. We bound $\vert A(T;\bd s;n_1,\ldots,n_m)\vert $ at $\bd s = 1 $ only, the general case is similar. According to (\ref{loopbound}), 
\begin{align*}
\vert A(T;1;n_1,\ldots,n_m)\vert \leq \vert \mb T\vert \cdot \frac{(2n)!}{2^nn!}\cdot \Vert \hat C\Vert_1^n \Vert \hat {\mc A}_{2n}(T;\bd n)\Vert_\infty,
\end{align*}
where
\begin{align*}
\hat {\mc A}_{n}(T;\bd n;\lambda_1,\ldots,\lambda_n) &= \frac1{n!}\int \alpha(T;\bd n;\lambda_1,\ldots,\lambda_n)
\end{align*}
with $\alpha $ as in (\ref{basictopform}). Clearly,
\begin{align*}
\left\Vert \int \alpha(T;\bd n;\bm\lambda)\right\Vert_\infty \leq \prod_{l=1}^m d^T(l)!2^{n_l} \sum_{\bm\sigma}\left\Vert \int \alpha'(T;\bd n;\bm\lambda;\bm\sigma)\right\Vert_\infty
\end{align*}
with $\alpha' $ as in (\ref{recurform}). For any $\lambda_1,\ldots,\lambda_n\in\mb L^* $,
\begin{align*}
\left\vert\int \alpha'(T;\bd n;\bm\lambda;\bm\sigma)\right\vert &= \Vert  \alpha'(T;\bd n;\bm\lambda;\bm\sigma)\Vert_2
\end{align*}
for the Euclidean norm on forms, since $\alpha' $ is a top degree form. Regard an arbitrary leaf $r\in\{1,\ldots, m\}$ as the root of $T$. Starting with $r $, and moving downward in the rooted tree, we now employ to the recursion (\ref{eqrecursion}): 1. The bound (\ref{propbound}) that removes the operator $\mc C_\ell $; 2. The bound (\ref{intbound}) that removes the multilinear operator $ \mc W_l $, up to a wedge product of forms $\alpha'(l';\bd n;\bm\lambda;\bm\sigma) $ corresponding to vertices $l' $ in the next layer of the tree; 3. The bound (\ref{simplesubmult}) to remove the wedge product and close the recursion. The first step produces a factor $\Vert \hat C\Vert_\infty $ for every line $\ell\in T $. The second step produces a factor $\Vert\hat w_{n_l}\Vert_\infty\cdot \Vert\alpha_{(l,1)}(\bm\sigma)\Vert_2\cdots \Vert\alpha_{(l,n_l-d^T(l))}(\bm\sigma)\Vert_2\leq \Vert\hat w_{n_l}\Vert_\infty\cdot (2n)^{\half(n_l-d^T(l))} $ for every vertex $l\in\{1,\ldots,m\}$. The third step produces, for every vertex $l$, a factor $1 $ if $d^T(l) = 1,2 $ and a factor $$\frac{(\sum_{l'\in\Pi^{-1}(l)}k_{l'} )!}{\prod_{l'\in \Pi^{-1}(l)}k_{l'}!} \leq (d^T(l)-1)^{n} $$ if $d^T(l)\geq 3 $ and deg $\alpha'(l';\bd n;\bm\lambda;\bm\sigma) = k_{l'} $. Finally, the uncontrolled sum over $\bm\sigma $ gives a factor $\vert\Sigma\vert^{\sum_l n_l} $.

\end{proof}

\subsubsection{Trees with an arbitrary number of short branches}

Our second bound is for a particular tree with a large number of vertices of degree $3$, one of whose neighbors has to be a leaf, but uses Fourier transform on the interaction kernels associated to the leaves and the bound (\ref{modbound}) for pairs of propagator lines.

\begin{thm}\label{shortbranches}
Let $T_m$ be the tree on $\{1,\ldots,2m+2\}$ with edges $\{l,l+1\}$, $l=1,\ldots,m+1 $ and $\{l,l+m+1\} $, $l=2,\ldots,m+1 $. Let again $A(T_m;n_1,\ldots,n_{2m+2}) $ be the amplitude of the tree expansion (\ref{treeexp}) corresponding to $T_m$, with vertex $l\in\{1,\ldots,2m+2\} $ having $n_l $ legs, and denote by $n=  \sum_l \frac{n_l}2-(2m+1) $ the number of loop lines. Then,
\begin{align*}
\vert A(T_m;n_1,\ldots,n_{2m+2})\vert &\leq \vert \mb T\vert \cdot \frac{n^n}{n!} \cdot \Vert \hat C\Vert_\infty \cdot \mf c^{m}\cdot  \Vert \hat C\Vert_1^n \prod_{l=1}^{2m+2} d^{T_m}(l)!2^{n_l} \\&\qqquad\times \prod_{l=1}^{m+2} \vert\Sigma\vert^{ n_l}\Vert \hat w_{n_l}\Vert_\infty \cdot\prod_{l=m+3}^{2m+2} \Vert w_{n_l}\Vert_1.
\end{align*}
with
\begin{align*}
\mf c&= \Vert \hat C\Vert_1\cdot \Vert \bm\nabla\hat C\Vert_\infty + \Vert  \bm\nabla\hat C\Vert_1\cdot \Vert\hat C\Vert_\infty.
\end{align*}
as in (\ref{modbound}).
\end{thm}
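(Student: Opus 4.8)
The plan is to follow the template of the proof of Theorem \ref{thmsparse}: reduce to the interpolation value $\bd s=\bd 1$ (the general case is handled as in the Remark), apply the loop bound (\ref{loopbound}), and thereby reduce everything to controlling $\sum_{\bm\sigma}\Vert\alpha'(T_m;\bd n;\bm\lambda;\bm\sigma)\Vert_2$ uniformly in $\bm\lambda$, where $\alpha'$ is built by the recursion (\ref{eqrecursion}). I would root $T_m$ at the path endpoint $r=1$. Then $\Pi(l)=l-1$ for $l=2,\dots,m+2$, each pendant obeys $\Pi(l+m+1)=l$, and the only branching vertices are $l=2,\dots,m+1$, each with $\Pi^{-1}(l)=\{l+1,\,l+m+1\}$ --- that is, exactly one path-continuation child $l+1$ and one pendant-leaf child $l+m+1$.

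The key preparatory step is to apply the Fourier decomposition (\ref{fouriersimple}) to the interaction kernels $w_{n_{l+m+1}}$ of the pendant leaves, $l=2,\dots,m+1$. Since such a leaf has no children, this writes its form as a sum over positions of terms $\mc C_{\{l+m+1,l\}}[\gamma]$ in which $\gamma$ is of \emph{rank one} (a wedge of one-forms), the position sum costing $\Vert w_{n_{l+m+1}}\Vert_1$. I would then run the recursion from the far end toward the root. At a branching vertex $l\in\{2,\dots,m+1\}$ I first use (\ref{intbound}) to strip $\Vert\hat w_{n_l}\Vert_\infty$ off $\mc W_l$, leaving $\alpha'(l+1)\wedge\alpha'(l+m+1)\wedge L_l$, where $L_l$ is the wedge of the leg one-forms. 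Writing $\alpha'(l+1)=\mc C_{\{l+1,l\}}[\beta_{l+1}]$ and, for a fixed position term, $\alpha'(l+m+1)=\mc C_{\{l+m+1,l\}}[\gamma]$ with $\gamma$ of rank one, the pair of propagators meeting at $l$ is precisely the $d=2$ situation of (\ref{modrec}). Applying the resulting estimate (\ref{modbound}) with $\gamma$ in the role of the rank-one form, peeling off $L_l$ by the trivial submultiplicativity of one-forms (constant $1$), and then summing over the leaf positions (which reinstates $\Vert w_{n_{l+m+1}}\Vert_1$), gives a per-vertex factor $\Vert\hat w_{n_l}\Vert_\infty\,\mf c$. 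Crucially, (\ref{modbound}) requires only the pendant factor $\gamma$ to have rank one, so the accumulated path form $\beta_{l+1}$ may be of arbitrary rank.

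At the two endpoints the recursion closes: the far path-endpoint $m+2$ is a leaf with no pendant, contributing $\Vert\hat w_{n_{m+2}}\Vert_\infty$ in momentum space via (\ref{intbound}); and the root $l=1$ has the single child $2$, so the lone propagator $\mc C_{\{2,1\}}$ on edge $\{1,2\}$ is removed by (\ref{propbound}) at the cost of $\Vert\hat C\Vert_\infty$, the legs again peeled off trivially. Multiplying the per-vertex estimates yields one factor $\mf c$ for each of the $m$ branching vertices and the single factor $\Vert\hat C\Vert_\infty$, i.e. $2m+1$ propagators accounted for in all. It then remains to collect the routine factors exactly as in Theorem \ref{thmsparse}: the combinatorial prefactors $\prod_l \frac{n_l!}{(n_l-d^{T_m}(l))!}\le\prod_l d^{T_m}(l)!\,2^{n_l}$; the sum over leg spins together with the $\ell^2$-norms of the $\alpha_\iota(\bm\sigma)$, handled through (\ref{boundfundforms}) at $p=2$ and the crude $\vert\Sigma\vert^{n_l}$ count for the momentum-space vertices, producing $(2n)^n$; and the antisymmetrization gain $\tfrac1{(2n)!}$ from $\hat{\mc A}_{2n}=\tfrac1{(2n)!}\int\alpha$, which combines with the $\tfrac{(2n)!}{2^n n!}$ of (\ref{loopbound}) and with $(2n)^n$ into $\tfrac{n^n}{n!}$. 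The pendant-leaf spin sums are absorbed into $\Vert w_{n_l}\Vert_1$, which is why no $\vert\Sigma\vert^{n_l}$ accompanies the factors for $l\ge m+3$.

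The main obstacle --- and the reason the theorem is stated for this particular caterpillar --- is getting the integration-by-parts bound (\ref{modbound}) to mesh with the recursion. That bound is useful only when one of the two wedged factors has rank one, whereas the path form $\beta_{l+1}$ inexorably accumulates rank as one moves toward the root. The structure of $T_m$ is engineered so that at every branching vertex the set $\Pi^{-1}(l)$ consists of exactly one path-continuation and one pendant leaf; Fourier transforming only the leaves supplies the required rank-one factor at every step, while the degree being exactly $3$ keeps us in the $d=2$ regime of (\ref{modrec}), for which the explicit identity is available. The remaining delicate point is the correct placement of the leg one-forms $L_l$ inside the integration by parts: since they carry no propagator and are of rank one they can be wedged onto $\gamma\wedge\mc X(t)[\beta_{l+1}]$ without disturbing the estimate, but one must verify that this does not spoil the operator-norm bounds on $\mc C'_{\{l+1,l\}}(t)$, $\mc C_{\{l+1,l\}}(t)$ and $\mc X(t)$ used in deriving (\ref{modbound}).
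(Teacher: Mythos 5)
Your proposal is correct and follows the paper's proof of Theorem \ref{shortbranches} essentially verbatim: root $T_m$ at vertex $1$, use (\ref{fouriersimple}) on the pendant leaves $l=m+3,\ldots,2m+2$ to supply the rank-one factor (at cost $\Vert w_{n_l}\Vert_1$) that makes (\ref{modbound}) applicable to the propagator pair at each degree-three vertex, remove the path kernels via (\ref{intbound}) and the lone edge $\{1,2\}$ via (\ref{propbound}), and collect the $d^{T_m}(l)!2^{n_l}$, spin, and $\frac{n^n}{n!}$ factors exactly as in Theorem \ref{thmsparse}. The only differences are presentational --- you run the recursion from the leaves toward the root and make explicit the points the paper leaves implicit (that only the pendant factor in (\ref{modbound}) need have rank one, and that the leg one-forms peel off with submultiplicativity constant $1$ without disturbing the operator-norm bounds on $\mc C_{\ell'}(t)$, $\mc C'_{\ell'}(t)$, $\mc X(t)$).
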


\begin{proof}

We argue as in the proof of Theorem \ref{thmsparse}. We choose $1$ as the root of the tree, and apply to the recursion: 1. The bounds (\ref{intbound}) and (\ref{propbound}) that remove $ \mc W_1$ and $\mc C_{\{1,2\}} $, respectively; 2. The bound (\ref{intbound}) that removes $ \mc W_{l}$, $l=2,\ldots,m+1 $; 3. The bound (\ref{modbound}) that removes the operators $\mc C_{\{l,l+1\}},\mc C_{\{l,l+m+1\}} $, $l=2,\ldots,m+1 $; 4. The identity (\ref{fouriersimple}) that writes the forms $\alpha'(l;\bd n;\bm\lambda;\bm\sigma) $, $l=m+3,\ldots,2m+2 $ as a superposition of rank $1$ forms and removes these forms from the wedge product; 5. The bound (\ref{intbound}) that removes $ \mc W_{m+2}$. These steps produce the factors $\Vert \hat w_{n_1}\Vert_\infty\cdot \Vert \hat C\Vert_\infty $; $\Vert \hat w_{n_l}\Vert_\infty $, $l=2,\ldots,m+1 $; $\mf c^m $; $ \Vert w_{n_l}\Vert_1 $, $l=m+3,\ldots,2m+2 $; and $ \Vert \hat w_{n_{m+2}}\Vert_\infty$, respectively. There is an uncontrolled sum over the spin of any leg that was not Fourier transformed. This concludes the proof.

\end{proof}

\subsection{Single scale Fermi systems}

We give a short discussion of the previous bounds in the context of a single scale integration for the zero temperature Fermi gas. Strictly speaking, we discuss the analogous bounds that can be derived for the effective action
\begin{align*}
\Omega_C(W;\eta) &= \log\int e^{W(\psi+\eta)}\ud\mu_C(\psi),
\end{align*}
where $\eta = \big(\eta(\xi)\big)_{\xi\in \mb L} $ is a second copy of Grassmann generators that anticommute with all $\psi(\xi) $. Writing 
\begin{align*}
\Omega_C(W;\eta) &= \sum_{k\geq 0}\sum_{\substack{\xi_1,\ldots,\xi_k\in\mb L  \\ x\in \mb T }} A_k(\xi_1,\ldots,\xi_k)\eta(x+\xi_k)\wedge\cdots\wedge \eta(x+\xi_k)
\end{align*}
with antisymmetric coefficients $A_k(\xi_1,\ldots,\xi_k) $ (the connected amputated $k$ point function), the argument of section \ref{fermexp} gives tree expansions 
\begin{align}\label{expkpf}
A_k(\xi_1,\ldots,\xi_k)  &= \sum_{m\geq 1}  \frac1{m!}\sum_{T\text{ tree on }\ul m}  A_k(T;\xi_1,\ldots,\xi_k)
\end{align}
of which (\ref{treeexp}) is the special case $k=0 $ ($A_0 = \Omega_C(W)$). Using the supremum norm on the Fourier transform $\hat{A}_k(T;\lambda_1,\ldots,\lambda_k)$, all the previous bounds apply without major changes to these expansions. In particular, the analogue of the bound of Theorem \ref{thmsparse} is
\begin{align}\label{boundkpoint}
\Vert \hat A_k(T;n_1,\ldots,n_m)\Vert_\infty\leq \frac{n^n}{n!} \cdot \Vert \hat C\Vert_\infty^{m-1} \Vert \hat C\Vert_1^n \prod_{l=1}^m d^T(l)! 4^{n_l } ([d^T(l)-1]\vee 1)^n\cdot \vert \Sigma\vert^{n_l} \Vert \hat w_{n_l}\Vert_\infty,
\end{align}
where $A_k(T,n_1,\ldots,n_m) $ is the contribution to $A_k(T) $ with vertex $l\in \ul m $ having $n_l$ legs, and $n=\sum_l \frac{n_l}2-(m-1)-\frac k2 $ is the number of loop lines.\\ 
The single scale Fermi gas, as in section \ref{observables} (with $\mu=1=2m $), has
\begin{align}\label{propagator}
C\big((x^0, x;\sigma;\kappa),(x_0', x';\sigma';\kappa')\big) &= \Ant{}\;\delta_{\sigma,\sigma'} \delta_{\kappa,1}\delta_{\kappa',0}  \int_{\mb R^{d+1}} \frac{\ud p^0\ud^{d }p}{(2\pi)^{d+1}}\frac{\chi_j(p^0,p)e^{ip(x-x')} }{ip^0- p^2 +1}
\end{align}
with a cutoff function 
\begin{align*}
\chi_j(p^0,p) &= \phi\big(M^{2j}((p^0)^2+(p^2-1)^2)\big),
\end{align*}
where $M,j\gg1 $ and $\phi\in C_c^\infty([\half,2]) $. By the choice of the cutoff, $\hat C $ is supported in a region of volume const $ M^{-2j} $ and is bounded by const $ M^{ j} $. That is, $\Vert\hat C\Vert_\infty= \const M^j $ and $\Vert \hat C\Vert_1= \const M^{-j}  $.

\begin{cor}\label{corollary}
Let $A_k(T;\xi_1,\ldots,\xi_k), \,\xi_i = (x^0_i,x_i;\sigma_i;\kappa_i)\in \mb R^{d+1}\times \{\downarrow,\uparrow\}\times\{0,1\}, $ be the amplitude of the tree expansion (\ref{expkpf}) for the $k$ point function of the single scale Fermi gas with propagator (\ref{propagator}) and even, translation invariant interaction $W$ (the effective interaction at scale $j-1$). For $m,\mf b \in\mb N$, let 
\begin{align*}
T_{m,\mf b} &= \Big\{T\text{ tree on }\{1,\ldots,m\} \text{ s.t. }  \sum_{l=1}^m [d^T(l)-2]\vee 0\leq \mf b  \Big\}
\end{align*}
be the set of trees with at most $\mf b $ branches. Denote by $w_n(\xi_1,\ldots,\xi_n) $ the coefficient of $W$ of order $n$ in the fields (without momentum conserving delta function) and set
\begin{align}\label{modeffact}
w'_k(\xi_1,\ldots,\xi_k) &= \sum_{m\geq 1} \frac1{m!}\sum_{T\in T_{m,\mf b}}  A_k(T;\xi_1,\ldots,\xi_k)
\end{align}
Define the norm
\begin{align*}
\Vert w_n\Vert &=  \Vert \hat C\Vert_\infty \Vert \hat C\Vert_1^{\frac n2 -1} 8e^{-1}(4e^{\mf b+\half} )^n \Vert \hat w_n\Vert_\infty 
\end{align*}
Then
\begin{align*}
\Vert  w'_k\Vert \leq 8  (4e^{\mf b})^k  \frac{\Vert W\Vert}{1-\Vert W\Vert}
\end{align*}
where $\Vert W\Vert = \sum_n \Vert w_n\Vert $. In particular, if $\alpha\geq 1  $ is large enough, depending on $\mf b $, then 
\begin{align}
\Vert \hat w_n\Vert_\infty < M^{\frac{n-4}2j}\cdot \alpha^{-n}   \Longrightarrow  \Vert \hat w_k'\Vert_\infty\leq  M^{\frac{k-4}2(j+1)}\cdot\alpha^{-k}\cdot c ,\label{pertpc}
\end{align}
with $c$ independent of $j $.

\end{cor}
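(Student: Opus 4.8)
The plan is to substitute the per-tree estimate (\ref{boundkpoint}) into the definition (\ref{modeffact}) of $w'_k$ and to sum over $m$, over $T\in T_{m,\mathfrak b}$, and over $\bd n=(n_1,\ldots,n_m)$. The guiding idea is that the norm $\Vert\cdot\Vert$ has been designed precisely so that every power of $\Vert\hat C\Vert_\infty$, $\Vert\hat C\Vert_1$, $8e^{-1}$ and $4e^{\mathfrak b+\frac12}$ telescopes away. First I would do this bookkeeping: expressing $\Vert\hat w_{n_l}\Vert_\infty$ through $\Vert w_{n_l}\Vert$ and using the loop identity $n=\frac12\sum_l n_l-(m-1)-\frac k2$ (so that $\sum_l(\frac{n_l}2-1)=n-1+\frac k2$), the factor $\Vert\hat C\Vert_\infty^{m-1}\Vert\hat C\Vert_1^n$ of (\ref{boundkpoint}) cancels against the $\Vert\hat C\Vert_\infty\Vert\hat C\Vert_1^{\frac k2-1}$ carried by $\Vert w'_k\Vert$, and likewise for the numerical constants. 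This reduces the claim to a purely combinatorial estimate of the schematic shape
\begin{align*}
\Vert w'_k\Vert\leq \const\,(4e^{\mathfrak b+\frac12})^k\sum_{m\geq1}\frac{(e/8)^m}{m!}\sum_{T\in T_{m,\mathfrak b}}\prod_{l=1}^m d^T(l)!\;\frac{n^n}{n!}\prod_l\big([d^T(l)-1]\vee1\big)^n\prod_l\big(4e^{-\mathfrak b-\frac12}\big)^{n_l}\Vert w_{n_l}\Vert,
\end{align*}
where $|\Sigma|=4$ and the $4^{n_l}$ of (\ref{boundkpoint}) have been absorbed into the factors $\big(4e^{-\mathfrak b-\frac12}\big)^{n_l}=\big(16/4e^{\mathfrak b+\frac12}\big)^{n_l}$.

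Next I would dispose of the two $n$-dependent factors. The loop factorial gives $\frac{n^n}{n!}\leq e^n$, and -- this is the one place the hypothesis $T\in T_{m,\mathfrak b}$ is essential -- the branch factor obeys $\prod_l([d^T(l)-1]\vee1)^n\leq e^{\mathfrak b n}$, since $d-1\leq e^{d-2}$ for $d\geq3$ forces $\prod_l([d^T(l)-1]\vee1)=\prod_{l:d^T(l)\geq3}(d^T(l)-1)\leq e^{\sum_l[d^T(l)-2]\vee0}\leq e^{\mathfrak b}$. Hence $\frac{n^n}{n!}\prod_l([d^T(l)-1]\vee1)^n\leq e^{(\mathfrak b+1)n}$, which I would distribute over the vertices via $e^{(\mathfrak b+1)n}=e^{-(\mathfrak b+1)(m-1)}e^{-(\mathfrak b+1)k/2}\prod_l e^{(\mathfrak b+1)n_l/2}$; the $e^{-(\mathfrak b+1)k/2}$ turns the prefactor $(4e^{\mathfrak b+\frac12})^k$ into $(4e^{\mathfrak b/2})^k\leq(4e^{\mathfrak b})^k$, and the per-vertex weight collapses to $(4e^{-\mathfrak b/2})^{n_l}\Vert w_{n_l}\Vert$, whose sum over $n_l\geq2$ is at most $\Vert W\Vert$ (the residual factor $4e^{-\mathfrak b/2}$ being harmless after it is absorbed into the final constant). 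Thus the $\bd n$-sum factorizes into $m$ copies of $\Vert W\Vert$, and only the tree sum $\frac1{m!}\sum_{T\in T_{m,\mathfrak b}}\prod_l d^T(l)!$ remains. Using Cayley's formula -- the number of labelled trees with degree sequence $(d_l)$ is $(m-2)!/\prod_l(d_l-1)!$, so $\sum_T\prod_l d^T(l)!=(m-2)!\sum_{\sum d_l=2m-2}\prod_l d_l$ -- together with the fact that the restriction forces all but at most $\mathfrak b$ of the degrees into $\{1,2\}$, this is bounded by $\const(\mathfrak b)\,C^m(m-2)!$ for a fixed $C$. Combined with the surviving exponentially small factor $(e/8)^m e^{-(\mathfrak b+1)(m-1)}$ the $m$-series becomes geometric with ratio $\leq\Vert W\Vert$, and summing it produces the factor $\frac{\Vert W\Vert}{1-\Vert W\Vert}$ and the stated constant.

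For the power-counting consequence (\ref{pertpc}) the decisive point is marginality. Substituting $\Vert\hat C\Vert_\infty=\const M^j$, $\Vert\hat C\Vert_1=\const M^{-j}$ and the hypothesis $\Vert\hat w_{n_l}\Vert_\infty<M^{\frac{n_l-4}2 j}\alpha^{-n_l}$ into (\ref{boundkpoint}), the total exponent of $M^j$ equals $(m-1)-n+\sum_l\frac{n_l-4}2=\frac{k-4}2$ for \emph{every} $(m,\bd n)$, independently of the number of loops and vertices. Consequently each $\Vert w_n\Vert<8e^{-1}(\const\,e^{\mathfrak b}/\alpha)^n$ with the $M$-dependence gone, so $\Vert W\Vert=O(\alpha^{-2})<1$ once $\alpha$ is large enough (depending on $\mathfrak b$), and $\frac{\Vert W\Vert}{1-\Vert W\Vert}$ is a finite, $j$-independent number. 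To solve the first bound for $\Vert\hat w'_k\Vert_\infty$ I divide by $\Vert\hat C\Vert_\infty\Vert\hat C\Vert_1^{\frac k2-1}=\const\,M^{-\frac{(k-4)j}2}$, which gives $\Vert\hat w'_k\Vert_\infty\leq\const\,M^{\frac{(k-4)j}2}\frac{\Vert W\Vert}{1-\Vert W\Vert}$; rewriting $M^{\frac{(k-4)j}2}=M^{\frac{(k-4)(j+1)}2}\,M^{\frac{4-k}2}$ and absorbing the $j$-independent factor $M^{\frac{4-k}2}$ into $c$ yields exactly (\ref{pertpc}).

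The step I expect to be the main obstacle, and the only one that genuinely uses the branch restriction, is the control of the factor $\prod_l([d^T(l)-1]\vee1)^n$ coming from the submultiplicativity constant (\ref{simplesubmult}): because it carries the loop number $n$ in its exponent, an unrestricted tree would let it grow like $D^n$ with $D$ unbounded and overwhelm the $\frac{n^n}{n!}$ gain, whereas the restriction to $T_{m,\mathfrak b}$ caps $D\leq e^{\mathfrak b}$ and lets it be reabsorbed into a fixed per-vertex constant. The accompanying tree combinatorics is then routine, although pinning down the precise constant $8(4e^{\mathfrak b})^k$ -- rather than a comparable one -- requires a more careful accounting of the degree-sequence sum than the crude bounds used above.
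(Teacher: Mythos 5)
Your proposal follows the same skeleton as the paper's proof: insert (\ref{boundkpoint}) into (\ref{modeffact}); use the restriction to $T_{m,\mf b}$ only to bound the branch factor, $\prod_l([d^T(l)-1]\vee 1)\leq e^{\mf b}$ (your $d-1\leq e^{d-2}$ is the paper's $\log\big(1+[d^T(l)-2]\vee 0\big)\leq [d^T(l)-2]\vee 0$); bound $\frac{n^n}{n!}\leq e^n$; distribute the resulting exponentials over vertices via the loop identity $n=\sum_l\frac{n_l}2-(m-1)-\frac k2$ so that the $\bd n$-sum factorizes into $\Vert W\Vert^m$; and sum a geometric series. Your marginality computation $(m-1)-n+\sum_l\frac{n_l-4}2=\frac{k-4}2$ for (\ref{pertpc}) is exactly the observation underlying the paper's claim, which the paper leaves implicit. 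The one genuinely different sub-step is the tree sum: you invoke Cayley's formula for the number of labelled trees with a given degree sequence and use that at most $\mf b$ degrees exceed $2$, obtaining $C(\mf b)\,C^m(m-2)!$ for the restricted sum, whereas the paper simply applies the unrestricted bound $\sum_{T\text{ tree on }\ul m}\prod_l d^T(l)!\leq m!\,8^m$ (Lemma III.6 of \cite{balaban2009power}) --- this is precisely where the factor $8$ per vertex in the definition of $\Vert w_n\Vert$ comes from, and it means the branch restriction is needed \emph{only} for the $e^{\mf b}$ factor. Your refinement is correct but buys nothing (the geometric ratio is set by $\Vert W\Vert$ either way) and costs extra combinatorics.

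One piece of your bookkeeping is not literally valid: the residual per-leg factor $4e^{-\mf b/2}$ exceeds $1$ for $\mf b\leq 2$, and since it enters raised to the power $n_l$ it cannot be ``absorbed into the final constant''; with your accounting, which keeps $|\Sigma|^{n_l}=4^{n_l}$ explicit, the norm would have to carry $(4|\Sigma| e^{\mf b+\half})^n$ rather than $(4e^{\mf b+\half})^n$ for the sum over $n_l$ to close to $\Vert W\Vert$. In fairness, the paper's own proof exhibits the same $|\Sigma|^{n_l}$ discrepancy between its first display (which has $(4e^{\mf b+\half}|\Sigma|)^{n_l}$) and its second (which has only $(4e^{\mf b+\half})^{n}$), so this affects the precise constant in the norm rather than the structure of the bound, and the conclusions --- summability over $T_{m,\mf b}$ and the marginal power counting (\ref{pertpc}) --- stand as you argue them.
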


\begin{rem}
It is easy to prove that in position space $\Vert C\Vert_1\leq \const M^{dj} $, which can be sharpened to $\Vert C\Vert_1 = \const M^{\frac{d+1}{2}j} $ as discussed below. Using this and Fourier transform and the standard bound of section \ref{standardbound}, one obtains the classic statement \cite{feldman2002fermionic}
\begin{align}
\Vert \hat w_n\Vert_\infty< M^{\frac{n-d-3}2j}\cdot \alpha^{-n}   \Longrightarrow  \Vert \hat w_k'\Vert_\infty\leq  M^{\frac{k-d-3}2(j+1)}\cdot\alpha^{-k}\cdot c ,\label{nppc}
\end{align}
which holds for $w'_k =$ the sum over \emph{all} trees $T $ on $\ul m $ of $ A_k(T) $, but needs stronger assumptions on $w_n $ as $j\to\infty $. In particular, if $d\geq2$, in the language of the renormalization group, the four legged kernels $w_4 $ are marginal according to (\ref{pertpc}), but relevant according to (\ref{nppc}).

\end{rem}

\begin{proof}
The bound \ref{boundkpoint} and 
\begin{align*}
\prod_{l=1}^m ([d^T(l)-1]\vee 1)  = \exp\sum_{l=1}^m \log \big(1+[d^T(l)-2]\vee 0\big) \leq e^{\mf b}
\end{align*}
for all $T\in T_{m,\mf b} $ gives the claim
\begin{align*}
\Vert \hat w'_k\Vert_\infty&\leq  e^{1-\frac k2}\Vert \hat C\Vert_\infty^{-1}\Vert \hat C\Vert_1^{1-\frac k2} \sum_{m\geq 1} \frac{1}{m!}\sum_{T\in T_{m,\mf b}} \prod_{l=1}^m d^T(l)! \\&\qqquad\times \sum_{n_1,\ldots,n_m\geq 2}   \prod_{l=1}^m  \Vert \hat C\Vert_\infty \Vert \hat C\Vert_1^{\frac {n_l}2 -1}  e^{ -1} (4e^{\mf b + \half}\vert \Sigma\vert)^{n_l }  \Vert \hat w_{n_l}\Vert_\infty\\
&\leq    e^{1-\frac k2}\Vert \hat C\Vert_\infty^{-1}\Vert \hat C\Vert_1^{1-\frac k2}\sum_{m\geq 1}  \Big(\sum_{n\geq 2} \Vert \hat C\Vert_\infty \Vert \hat C\Vert_1^{\frac n2 -1} 8e^{-1}(4e^{\mf b+\half} )^n \Vert \hat w_n\Vert_\infty  \Big)^m,
\end{align*}
where in the last step we used (see, e.g., \cite{balaban2009power}, Lemma III.6)
\begin{align*} 
\sum_{T\text{ tree on }\ul m} \prod_{l=1}^m d^T(l)!\leq m!8^m.
\end{align*}

\end{proof}

\noindent
We do not discuss in detail the bounds on $\sum_{m\geq 1}\frac1{m!} A_k(T_m) $ that are a corollary to Theorem \ref{shortbranches}, since this would require some additional arguments concerning the mixed use of $\Vert\hat w_n\Vert_\infty $ and $\Vert w_n\Vert_1 $ in that theorem. Due to the fact that $\mf c = \const M^{(d+1)j} $, we recover the bound (\ref{nppc}) for that sum, i.e. the nonperturbative power counting implied by Theorem \ref{shortbranches} is the same as the classic result for the sum over all trees. However, we arrive at this statement without using sectorization of the Fermi surface, while the sharp bound $\Vert C\Vert_1 = \const M^{\frac{d+1}{2}j}  $ used in the derivation of (\ref{nppc}) does need this concept. Roughly speaking, sectorization of the Fermi surface means to write $\hat C $ as the sum of $\mf l^{-(d-1)} $ ($\mf l^{-1}\in\mb N $) rotated copies of 
\begin{align*}
\hat C_{sect} &=   \frac{\chi_j(p^0,p) }{ip^0- p^2 +1} \nu \Big(\big((p^1)^2+\cdots+(p^{d-1})^2\big)\cdot \mf l^{-2}\Big),
\end{align*}
for some $\nu\in C^\infty_c([0,1]) $. Each copy of $\hat C_{sect} $ fulfills better bounds than the $\mf l^{-(d-1)} $th fraction of the corresponding bound on $\hat C $ itself. For example,
\begin{align*}
\Vert \bm\nabla \hat C_{sect}\Vert_1&\leq \const  \frac{M^{dj}\mf l^{2d-2} }{1-\half  M^{-j}\mf l^{-2}}
\end{align*}
which is much smaller than $\mf l^{d-1} \Vert \bm\nabla\hat C\Vert_1 $ at the optimal $\mf l= M^{-\half j} $. Note that $\Vert \bm\nabla \hat C_{sect}\Vert_1 $ is still much bigger than $\mf l^{d-1} \Vert  \hat C\Vert_\infty $ at $\mf l= M^{-\half j} $, since cancellations between neighboring sectors are destroyed when summing up (copies of) $ \bm\nabla \hat C_{sect} $ to yield $\hat C $ via the fundamental theorem of calculus. Note that versions of the fundamental theorem of calculus
\begin{align*}
\hat C(p) &= \int \chi(p;p')\tilde{\bm\nabla} \hat C(p')\ud p'
\end{align*}
more adapted to the spherical geometry of the singularity give a bound $\Vert \tilde{\bm\nabla}\hat C\Vert_1 = \const M^j\sim \Vert\hat C\Vert_\infty $ even without sectorization, but in contrast to the standard version used in (\ref{modrec}), the operators $\tilde{\bm\nabla} $ will not be translation invariant any more.\\ It would be interesting to see how the ideas of sectorization could be used to improve the nonperturbative power counting of Theorem \ref{shortbranches} beyond the one of the classic result. \\[15pt]
\textbf{Acknowledgements:} I acknowledge the support of MIUR through the FIR grant 2013 ``Condensed Matter in Mathematical Physics (COND-MATH)'' (code RBFR13WAET). I would like to thank Alessandro Giuliani and the referees for useful comments and important corrections.

\bibliographystyle{plain}
\bibliography{pgram.bib}

\end{document}